\documentclass[10pt,envcountsame,envcountsect]{llncs}

\usepackage{latexsym,amsmath,amssymb}
\usepackage[dvips]{graphicx}
\usepackage[dvips]{color}

\newcommand{\rw}[2]{\ {\underset{#1}{\longrightarrow}}\,_{#2}\ }

\newcommand{\hra}[2]{\hookrightarrow_{#1}^{#2}}
\newcommand{\ra}[2]{\rightarrow_{#1}^{#2}}

\newcommand{\sra}[2]{\mbox{$\rightarrow\!${\tiny\raisebox{0.5ex}{$|$}}}_{#1}^{#2}}
\newcommand{\ehra}[2]{{\overset{#2}{\underset{#1}\hookrightarrow}}}
\newcommand{\era}[2]{{\overset{#2}{\underset{#1}\longrightarrow}}}

\newcommand{\esra}[2]{\overset{#2}
  {\underset{#1}{\mbox{$\longrightarrow\!${\tiny\raisebox{0.5ex}{$|$}}}}}}

\newcommand{\lra}{\longrightarrow} \newcommand{\Lra}{\Longrightarrow}
 
\newcommand{\ssq}{\mbox{\footnotesize$\square$}}

\newcommand{\ru}{R_+}
\newcommand{\rr}{R_=}
\newcommand{\rd}{R_-}

\newcommand{\nt}[1]{<\negthickspace#1\negthickspace>}

\newcommand{\piu}[1]{\pi_1(<\negthickspace#1\negthickspace>)}
\newcommand{\pid}[1]{\pi_2(<\negthickspace#1\negthickspace>)}

\author{Antoine Meyer \thanks{This work has been supported in part by
    the European IST-FET project ADVANCE (contract No.
    IST-1999-29082).}}
\institute{I{\sc risa}, campus de Beaulieu, Rennes\\
  L{\sc iafa}, Universit\'e de Paris 7\\
  \email{Antoine.Meyer@liafa.jussieu.fr}}
\title{On Term Rewriting Systems\\
  Having a Rational Derivation}

\begin{document}

\maketitle

\begin{abstract}
  Several types of term rewriting systems can be distinguished by
  the way their rules overlap. In particular, we define the classes of
  prefix, suffix, bottom-up and top-down systems, which generalize
  similar classes on words. Our aim is to study the derivation
  relation of such systems (i.e.  the reflexive and transitive closure
  of their rewriting relation) and, if possible, to provide a finite
  mechanism characterizing it. Using a notion of rational relations
  based on finite graph grammars, we show that the derivation of any
  bottom-up, top-down or suffix systems is rational, while it can be
  non recursive for prefix systems.
\end{abstract}

\section{Introduction}

Word rewriting systems are among the most general formalisms found in
computer science to model word transformations. They generalize
grammars, can represent the runs of finite automata, transducers,
pushdown automata or even Turing machines. They can thus be considered
as a unifying framework to compare all these heterogeneous formalisms.
For instance, \cite{ck02} proposes a homogeneous presentation of
several well-known families of infinite graphs, using an approach
based on word rewriting systems proposed in \cite{ck98}, which is to
consider the `Cayley graph' of a rewriting system. In another paper
\cite{ca00}, a classification of word rewriting systems according to
the way their rules overlap is established. It is proved that the
derivation relations of four classes of systems are rational, which
means that they can be generated by finite transducers. These systems
called left, right, prefix and suffix, were later used in \cite{ck02}.
Any other class is shown to contain at least one system whose
derivation is not rational.

The aim of this work is to extend these results from words to terms.
To summarize, we will be interested in term rewriting systems whose
derivation can be characterized by a finite mechanism. First of all,
we have to specify which definition of rationality for relations on
terms we intend to use, as several distinct notions already exist (see
\cite{ra91} for an overview).  Unfortunately, none of them is as
widely adopted as the standard one for words, as each relies on
different characteristics of the word case, and serves a different
purpose. In this paper, we will adopt a notion introduced in
\cite{ra97}, which makes use of hyperedge replacement graph grammars.
The reason for this choice is the close similarity between the way
these grammars work, and the asynchronous mechanism of a word
transducer. Then we extend the definitions of left, right, prefix and
suffix systems to terms, yielding what we will call bottom-up,
top-down, prefix and suffix systems, and investigate the rationality
of their derivation relations. We also mention recognizability
preservation properties for bottom-up, top-down and suffix systems.

Numerous works deal with term rewriting systems. Among the closest to
our approach, we can mention for instance \cite{gv98} and
\cite{tks00}, which specifically investigate the recognizability
preservation properties of term rewriting systems. Both papers study
classes of systems which properly include the class of top-down
systems, and prove that they preserve recognizability.  However, the
derivation relations of these systems are not rational (more
generally, no finite representation of these relations is given). On
the contrary, Dauchet and Tison extensively studied ground term
rewriting systems, i.e. systems whose rules do not contain variables
\cite{dt85}. In particular, they proved that these systems have a
decidable first order theory with reachability \cite{dt90} by
explicitely building their derivation relation. From another point of
view, \cite{lo02} and \cite{co02} investigated the geometric
properties of transition graphs of ground systems and compared this
family of graphs with respect to other well-known families. Note that
by definition, ground systems are a special kind of suffix systems.
Finally, we can mention the theme of symbolic model-checking, whose
main idea is to represent regular sets of configurations by finite
word automata and system transitions by rewrite rules or transducers
(see for example \cite{bjnt00}). This field is currently being
extended to systems with richer topologies, like trees
\cite{ajmo02,bt02}. A central problem relevant to this method is to
compute the set of configurations reachable in any number of steps
when starting from a \emph{regular} set of configurations (for
instance a recognizable term language).

This paper is organized as follows: after recalling a few basic
notions about trees, terms and recognizable languages, we present the
notion of rationality for term relations introduced in \cite{ra97}. In
Section \ref{sec:trs} we introduce term rewriting systems, and detail
the four subclasses we consider. The last two parts present our
results concerning the rationality of the derivations of top-down,
bottom-up and suffix systems, as well as remarks concerning their
preservation of recognizability.

\section{Terms and Trees}
\label{prelim}

Let $F = \bigcup_{n \geq 0} F_n$ be a finite ranked alphabet, each
$F_n$ being a set of \emph{function} symbols of arity $n$ (elements of
$F_0$ are \emph{constants}), and $X$ be a finite set of
\emph{variable} symbols.  The set $T(F,X)$ of finite first-order terms
on $F$ with variables in $X$ is the smallest set including $X$ and
satisfying $f \in F_n\ \land\ t_1,\ldots,t_n \in T(F,X)\ \Rightarrow\ 
f t_1 \ldots t_n \in T(F,X)$.  The set $T(F,X)^+$ of tuples of terms
will be called the set of \emph{term words}. A term word $t =
(t_1,\ldots ,t_n)$ is usually noted $t_1 \ldots t_n$, and $t(i)$ is
used to denote $t_i$. The dimension of $t$ is called its length and
noted $|t|$ (here $|t| = n$). Term words containing no variable are
called \emph{ground}. The set of ground terms is noted
$T(F,\emptyset)$ or simply $T(F)$. The set of variables actually
occurring in a term or term word $t$ is $\mathit{Var}(t)$, and $t$ is
said \emph{linear} if each of its variables occurs only once. If
moreover $t$ has $n$ variables, it is called a \emph{n-context}. The
variables of a $n$-context are conventionally noted $\ssq_1, \ldots
\ssq_n$. The set of $n$-contexts is denoted by $C_n(F)$, the set of
all contexts by $C(F)$. A common operation on terms is
\emph{substitution}. A substitution is fully defined by a mapping from
$X$ to $T(F,X)$, and extended to a morphism as follows: we note $t
\sigma$ the application of a substitution $\sigma$ to a term word $t$,
which is done by replacing every occurrence of each variable $x$
occurring in $t$ by the term $\sigma(x)$. The set of substitutions
over $F$ and $X$ is noted $S(F,X)$. For any term word $s = s_1 \ldots
s_n$ and when $t$ is a $n$-context, we use $t[s]$ as a shorthand
notation for the variable substitution $t \{\ssq_i \mapsto s_i\ |\ i
\in [1,n]\}$. All these notations are extended to sets of term words
in the usual way. A term, term word, context or substitution is said
to be \emph{proper} or \emph{non-trivial} if it contains at least one
symbol in $F$.

Let $\mathbb{N}$ be the set of strictly positive integers, we call
\emph{position} any word in the set $\mathbb{N}^*$. Every term $t$ in
$T(F,X)$ can be represented as a finite ordered tree whose nodes are
labeled by symbols in $F$ or variables in $X$, or equivalently as a
mapping from a prefix-closed set of \emph{positions}
$\mathit{Pos}(t)$, called the \emph{domain} of the term, to the set $F
\cup X$. Let $t = f(t_1,\ldots,t_n,\ldots)$ be a term represented by
an ordered tree, position $\varepsilon$ denotes the root of $t$, and
for $n \in \mathbb{N},\ p \in \mathbb{N}^*$, position $np$ denotes the
node at position $p$ in subtree $t_n$. Seeing terms as trees, term
words can be seen as \emph{ordered forests}. In the following, we will
use the prefix partial order on positions, noted $\geqslant$: let $p$
and $q$ be two positions, $p \geqslant q$ if there is some $q' \in
\mathbb{N}^*$ such that $p = qq'$. If furthermore $q' \neq
\varepsilon$, we write $p > q$. We denote by $\mathit{pos}(x,t)$ the
set of positions at which the variable $x \in X$ occurs in term $t \in
T(F,X)$.

The most common acceptors for languages of trees (and thus terms) are
finite tree automata. Among several variants, we will only consider
\emph{top-down} tree automata, defined by a finite set $Q$ of control
states and a finite set $R$ of transition rules of the form $qf
\rightarrow f q_1 \ldots q_n$ where $f \in F_n$ ($n$ can be $0$) and
$q, q_1, \ldots q_n \in Q$. A configuration is an embedding of control
states in the input tree, i.e. a tree from $T(F\,\cup\,Q,X)$, where
each $q \in Q$ is considered a unary function symbol. A rule $qf
\rightarrow f q_1 \ldots q_n$ can be applied in configuration $c_1$ to
reach configuration $c_2$ if $c_1 = t[q f t_1 \ldots t_n]$ and $c_2 =
t[f q_1 t_1 \ldots q_n t_n]$ for any context $t$ and term word $t_1
\ldots t_n$. A run of the automaton is a sequence of applications of
rules on a given input. A ground term $t \in T(F,X)$ is
\emph{accepted} or \emph{recognized} if there is a run from
configuration $q_0 t$ (where $q_0$ is an initial control state) to
configuration $t$. The set of terms accepted by a tree automaton $A$
is called the language of $A$ and noted $L(A)$. The languages accepted
by finite tree automata are called \emph{recognizable}.

\section{Rational Tree Relations}

Several authors have tried to define suitable notions of binary
relations over terms generalizing known families of relations over
words, like for instance the recognizable relations, or the more
general rational relations (i.e. relations recognized by finite
transducers). As of now, no extension to terms is really considered
canonical, as each family of relations has its own merits and
drawbacks. Several distinct families can be encountered: recognizable
relations as such, relations defined as rational languages over some
overlap coding of both projections of the relation, relations induced
by various types of tree transducers, or the more specific class of
ground tree transductions, to cite but a few (see \cite{ra91} for a
survey).

In \cite{ra97}, a notion of rationality for tuples of trees according
to the union, substitution and iterated substitution operations is
proposed. This notion can also be seen as a definition for binary
rational relations over tuples of trees, and thus as a special case,
binary relations over trees. Similarly to the word case, this class is
strictly more general than the class of recognizable relations. In
his paper, Raoult proves that the rational languages of tuples can be
generated using a special kind of hyperedge replacement grammars. This
definition is justified by its similarity to rational word relations
on several aspects: first, as it should be, it coincides with rational
word relations when restricted to trees of degree one. Second, it is
closed under projection on any number of components, union and
intersection. Finally, its mechanism is indeed quite close to the way
a transducer works. However, this generality has a cost, and this
class of relations is not closed under composition.

First, we need to define the product operation we shall use to define
rational sets, which is an extension of the usual substitution
operation. Let $t$ be a term word, $x$ a word of $n$ variables having
$k \geq 0$ instances $x_1 \ldots x_k$ in $t$ (i.e. a total of $n * k$
variables), and $M$ a set of $n$-tuples of terms. We define $t \cdot_x
M$ as the set of tuples of terms obtained by replacing each instance
of $x$ in $t$ with a (possibly different) element of $M$. Formally: $t
\cdot_x M := \{t \{ s_i(j) \mapsto x_i(j)\,|\,i \in [1,k],\,j \in
[1,n] \}\}$.  It is extended to sets in the usual way: $L \cdot_x M :=
\{t \cdot_x M\ |\ t \in L\}$. Furthermore, define $L^{n_x} := L
\cdot_x L^{{n-1}_x}$ and $L^{*_x} := \bigcup_{n \geq 0} L^{n_x}$. We
are now ready to define the notion of rationality associated to this
product:

\begin{definition}[\cite{ra97}]
  The set $\mathit{Rat}_n$ of rational languages of $n$-tuples of
  trees is the smallest set of languages containing the finite
  languages of tuples and closed under the following operations:
  \begin{enumerate}
  \item $L \in \mathit{Rat}_n\ \land\ M \in \mathit{Rat}_n\ 
    \Rightarrow\ L \cup M \in \mathit{Rat}_n$
  \item $L \in \mathit{Rat}_n\ \land\ x \in X^m\ \land\ M \in
    \mathit{Rat}_m\ \Rightarrow\ L \cdot_{x} M \in \mathit{Rat}_n$
  \item $L \in \mathit{Rat}_n\ \land\ x \in X^n\ \Rightarrow\ 
    L^{*_{x}} \in \mathit{Rat}_n$
  \end{enumerate}
  The family $\mathit{Rat}$ of rational languages over tuples of terms
  is the union of all $\mathit{Rat}_n$, for $n \geq 1$.
\end{definition}

One should note that this notion of rationality differs from the one
defined in \cite{lw01}, for example, as the concatenation (or
`series') product is not directly taken into account, and substitution
is done simultaneously on several variables. From this definition
arises a straightforward notion of \emph{rational expression}, which
extends the usual notion on words. It should be noted that
$\mathit{Rat}_1$ does not coincide with the set of recognizable term
languages. For example, on $\Sigma = \{f^{(2)}, g^{(1)}, {h}^{(1)},
a^{(0)}\}$, the language $fg^nag^na \in \mathit{Rat}_1$ is defined by
the rational expression $f \ssq_1 \ssq_2 [g \ssq_1 g \ssq_2]^* [a a]$,
but it is not a recognizable term language.

Let us now recall the hyperedge replacement grammars used in
\cite{ra97}, which generate the rational languages of tuples of terms.
In this paper, we will call \emph{grammar} a hyperedge replacement
grammar such that every production $(A,\alpha)$ has the following
properties:
\begin{itemize}
\item the terminal subgraph of $\alpha$, say $\alpha_t$, obtained by
  removing all non-terminal hyperedges from $\alpha$, is an ordered
  forest with $n$ connex components (a $n$-tuple of trees), where $n$
  is the arity of $A$,
\item the vertices of $\alpha$ belonging to a hyperedge are leaves of
  $\alpha_t$,
\item no vertex of $\alpha$ belongs to more than one hyperedge.
\end{itemize}
These properties allow us to refer to the right-hand sides of this
type of grammars as `leaf-linked forests'. The definition of grammar
derivation is the usual one for hyperedge replacement. It will be
useful to also recall the formal definition of a grammar from the
point of view of terms, as it is done in the original paper:

\begin{definition}[\cite{ra97}]
  Given a set $X$ of variables, a \emph{production} is a pair
  $(A,\alpha)$, where $A \in X^n$ ($A = A_1 \ldots A_n$ is called a
  \emph{non-terminal}), $\alpha \in T(F,X \times \mathbb{N})^n$ (here
  $X \times \mathbb{N}$ denotes the set of numbered instances of
  variables of $X$), and both $A$ and $\alpha$ are linear. A
  \emph{grammar} is a finite set of productions such that the
  variables occurring in the right-hand sides can be grouped to form
  instances of non-terminals. A step of derivation of a grammar is
  defined as $t \ra{G}{} t \{A_j^i \mapsto \alpha_j\,|\,j \in [1,n]\}$
  where $t$ is a term word, there is a production $(A,\alpha)$ in $G$
  and $A^i$ is an instance of $A$ in $t$. The language generated by a
  grammar $G$ from axiom $A$ is the set of tuples of ground trees
  $L(G,A) = \{w \in T(F)^{|A|}\ |\ A \ra{G}{*} w\}$.
\end{definition}

\begin{example}
  \label{ex:gr}
  Let $A = A_1A_2$ and $B = B_1B_2B_3$ be two non-terminals of
  respective arity 2 and 3. The grammar $G_1$ having rules
  \begin{align*}
    A\ \lra &\ a\ a\ |\ gA_1\ gA_2\ |\ fA_1^1A_1^2\ fA_2^1A_2^2\ |\ 
    fB_1B_2\ B_3
    \\
    B\ \lra &\ A_1^1\ A_1^2\ fA_2^1A_2^2\ |\ gB_1\ gB_2\ hB_3
  \end{align*}
  can be represented as a HR grammar in the following way:
  \begin{center}\begin{picture}(0,0)%
\includegraphics{graphics/ex_gr2.pstex}%
\end{picture}%
\setlength{\unitlength}{3947sp}%
\begingroup\makeatletter\ifx\SetFigFont\undefined%
\gdef\SetFigFont#1#2#3#4#5{%
  \reset@font\fontsize{#1}{#2pt}%
  \fontfamily{#3}\fontseries{#4}\fontshape{#5}%
  \selectfont}%
\fi\endgroup%
\begin{picture}(3453,1271)(1561,-2461)
\put(5002,-1394){\makebox(0,0)[b]{\smash{{\SetFigFont{8}{9.6}{\familydefault}{\mddefault}{\updefault}{\color[rgb]{0,0,0}$\ssq$}%
}}}}
\put(1642,-1394){\makebox(0,0)[rb]{\smash{{\SetFigFont{8}{9.6}{\familydefault}{\mddefault}{\updefault}{\color[rgb]{0,0,0}$A\quad\lra$}%
}}}}
\put(1882,-1394){\makebox(0,0)[b]{\smash{{\SetFigFont{8}{9.6}{\familydefault}{\mddefault}{\updefault}{\color[rgb]{0,0,0}$a$}%
}}}}
\put(2122,-1394){\makebox(0,0)[b]{\smash{{\SetFigFont{8}{9.6}{\familydefault}{\mddefault}{\updefault}{\color[rgb]{0,0,0}$a$}%
}}}}
\put(2362,-1394){\makebox(0,0)[b]{\smash{{\SetFigFont{8}{9.6}{\familydefault}{\mddefault}{\updefault}{\color[rgb]{0,0,0}$|$}%
}}}}
\put(3082,-1394){\makebox(0,0)[b]{\smash{{\SetFigFont{8}{9.6}{\familydefault}{\mddefault}{\updefault}{\color[rgb]{0,0,0}$|$}%
}}}}
\put(4282,-1394){\makebox(0,0)[b]{\smash{{\SetFigFont{8}{9.6}{\familydefault}{\mddefault}{\updefault}{\color[rgb]{0,0,0}$|$}%
}}}}
\put(2602,-1274){\makebox(0,0)[b]{\smash{{\SetFigFont{8}{9.6}{\familydefault}{\mddefault}{\updefault}{\color[rgb]{0,0,0}$g$}%
}}}}
\put(2842,-1274){\makebox(0,0)[b]{\smash{{\SetFigFont{8}{9.6}{\familydefault}{\mddefault}{\updefault}{\color[rgb]{0,0,0}$g$}%
}}}}
\put(3442,-1274){\makebox(0,0)[b]{\smash{{\SetFigFont{8}{9.6}{\familydefault}{\mddefault}{\updefault}{\color[rgb]{0,0,0}$f$}%
}}}}
\put(3922,-1274){\makebox(0,0)[b]{\smash{{\SetFigFont{8}{9.6}{\familydefault}{\mddefault}{\updefault}{\color[rgb]{0,0,0}$f$}%
}}}}
\put(4642,-1274){\makebox(0,0)[b]{\smash{{\SetFigFont{8}{9.6}{\familydefault}{\mddefault}{\updefault}{\color[rgb]{0,0,0}$f$}%
}}}}
\put(2722,-1754){\makebox(0,0)[b]{\smash{{\SetFigFont{8}{9.6}{\familydefault}{\mddefault}{\updefault}{\color[rgb]{0,0,0}$A$}%
}}}}
\put(3562,-1754){\makebox(0,0)[b]{\smash{{\SetFigFont{8}{9.6}{\familydefault}{\mddefault}{\updefault}{\color[rgb]{0,0,0}$A$}%
}}}}
\put(3802,-1754){\makebox(0,0)[b]{\smash{{\SetFigFont{8}{9.6}{\familydefault}{\mddefault}{\updefault}{\color[rgb]{0,0,0}$A$}%
}}}}
\put(4882,-1694){\makebox(0,0)[b]{\smash{{\SetFigFont{8}{9.6}{\familydefault}{\mddefault}{\updefault}{\color[rgb]{0,0,0}$B$}%
}}}}
\put(1561,-2070){\makebox(0,0)[rb]{\smash{{\SetFigFont{8}{9.6}{\familydefault}{\mddefault}{\updefault}{\color[rgb]{0,0,0}$B\quad\lra$}%
}}}}
\put(1861,-2070){\makebox(0,0)[b]{\smash{{\SetFigFont{8}{9.6}{\familydefault}{\mddefault}{\updefault}{\color[rgb]{0,0,0}$\ssq$}%
}}}}
\put(2101,-2070){\makebox(0,0)[b]{\smash{{\SetFigFont{8}{9.6}{\familydefault}{\mddefault}{\updefault}{\color[rgb]{0,0,0}$\ssq$}%
}}}}
\put(2821,-2070){\makebox(0,0)[b]{\smash{{\SetFigFont{8}{9.6}{\familydefault}{\mddefault}{\updefault}{\color[rgb]{0,0,0}$|$}%
}}}}
\put(2461,-1950){\makebox(0,0)[b]{\smash{{\SetFigFont{8}{9.6}{\familydefault}{\mddefault}{\updefault}{\color[rgb]{0,0,0}$f$}%
}}}}
\put(3061,-1950){\makebox(0,0)[b]{\smash{{\SetFigFont{8}{9.6}{\familydefault}{\mddefault}{\updefault}{\color[rgb]{0,0,0}$g$}%
}}}}
\put(3301,-1950){\makebox(0,0)[b]{\smash{{\SetFigFont{8}{9.6}{\familydefault}{\mddefault}{\updefault}{\color[rgb]{0,0,0}$g$}%
}}}}
\put(3301,-2430){\makebox(0,0)[b]{\smash{{\SetFigFont{8}{9.6}{\familydefault}{\mddefault}{\updefault}{\color[rgb]{0,0,0}$B$}%
}}}}
\put(2101,-2430){\makebox(0,0)[b]{\smash{{\SetFigFont{8}{9.6}{\familydefault}{\mddefault}{\updefault}{\color[rgb]{0,0,0}$A$}%
}}}}
\put(2341,-2430){\makebox(0,0)[b]{\smash{{\SetFigFont{8}{9.6}{\familydefault}{\mddefault}{\updefault}{\color[rgb]{0,0,0}$A$}%
}}}}
\put(3541,-1950){\makebox(0,0)[b]{\smash{{\SetFigFont{8}{9.6}{\familydefault}{\mddefault}{\updefault}{\color[rgb]{0,0,0}$h$}%
}}}}
\end{picture}%
\end{center}
  Then a possible production sequence of $G_1$ would be:
  \begin{center}\begin{picture}(0,0)%
\includegraphics{graphics/ex_grsequ2.pstex}%
\end{picture}%
\setlength{\unitlength}{3947sp}%
\begingroup\makeatletter\ifx\SetFigFont\undefined%
\gdef\SetFigFont#1#2#3#4#5{%
  \reset@font\fontsize{#1}{#2pt}%
  \fontfamily{#3}\fontseries{#4}\fontshape{#5}%
  \selectfont}%
\fi\endgroup%
\begin{picture}(4747,891)(1703,-1962)
\put(1941,-1334){\makebox(0,0)[b]{\smash{{\SetFigFont{8}{9.6}{\familydefault}{\mddefault}{\updefault}{\color[rgb]{0,0,0}$f$}%
}}}}
\put(5771,-1158){\makebox(0,0)[b]{\smash{{\SetFigFont{8}{9.6}{\familydefault}{\mddefault}{\updefault}{\color[rgb]{0,0,0}$f$}%
}}}}
\put(4342,-1158){\makebox(0,0)[b]{\smash{{\SetFigFont{8}{9.6}{\familydefault}{\mddefault}{\updefault}{\color[rgb]{0,0,0}$f$}%
}}}}
\put(3132,-1155){\makebox(0,0)[b]{\smash{{\SetFigFont{8}{9.6}{\familydefault}{\mddefault}{\updefault}{\color[rgb]{0,0,0}$f$}%
}}}}
\put(2299,-1453){\makebox(0,0)[b]{\smash{{\SetFigFont{8}{9.6}{\familydefault}{\mddefault}{\updefault}{\color[rgb]{0,0,0}$\ssq$}%
}}}}
\put(2180,-1750){\makebox(0,0)[b]{\smash{{\SetFigFont{8}{9.6}{\familydefault}{\mddefault}{\updefault}{\color[rgb]{0,0,0}$B$}%
}}}}
\put(1703,-1453){\makebox(0,0)[rb]{\smash{{\SetFigFont{8}{9.6}{\familydefault}{\mddefault}{\updefault}{\color[rgb]{0,0,0}$A\quad\era{G}{}$}%
}}}}
\put(2656,-1453){\makebox(0,0)[b]{\smash{{\SetFigFont{8}{9.6}{\familydefault}{\mddefault}{\updefault}{\color[rgb]{0,0,0}$\era{G}{}$}%
}}}}
\put(3866,-1455){\makebox(0,0)[b]{\smash{{\SetFigFont{8}{9.6}{\familydefault}{\mddefault}{\updefault}{\color[rgb]{0,0,0}$\era{G}{}$}%
}}}}
\put(5295,-1455){\makebox(0,0)[b]{\smash{{\SetFigFont{8}{9.6}{\familydefault}{\mddefault}{\updefault}{\color[rgb]{0,0,0}$\era{G}{2}$}%
}}}}
\put(5652,-1455){\makebox(0,0)[b]{\smash{{\SetFigFont{8}{9.6}{\familydefault}{\mddefault}{\updefault}{\color[rgb]{0,0,0}$g$}%
}}}}
\put(5890,-1455){\makebox(0,0)[b]{\smash{{\SetFigFont{8}{9.6}{\familydefault}{\mddefault}{\updefault}{\color[rgb]{0,0,0}$g$}%
}}}}
\put(6247,-1455){\makebox(0,0)[b]{\smash{{\SetFigFont{8}{9.6}{\familydefault}{\mddefault}{\updefault}{\color[rgb]{0,0,0}$f$}%
}}}}
\put(5652,-1753){\makebox(0,0)[b]{\smash{{\SetFigFont{8}{9.6}{\familydefault}{\mddefault}{\updefault}{\color[rgb]{0,0,0}$a$}%
}}}}
\put(5890,-1753){\makebox(0,0)[b]{\smash{{\SetFigFont{8}{9.6}{\familydefault}{\mddefault}{\updefault}{\color[rgb]{0,0,0}$a$}%
}}}}
\put(6128,-1753){\makebox(0,0)[b]{\smash{{\SetFigFont{8}{9.6}{\familydefault}{\mddefault}{\updefault}{\color[rgb]{0,0,0}$a$}%
}}}}
\put(6366,-1753){\makebox(0,0)[b]{\smash{{\SetFigFont{8}{9.6}{\familydefault}{\mddefault}{\updefault}{\color[rgb]{0,0,0}$a$}%
}}}}
\put(4223,-1455){\makebox(0,0)[b]{\smash{{\SetFigFont{8}{9.6}{\familydefault}{\mddefault}{\updefault}{\color[rgb]{0,0,0}$g$}%
}}}}
\put(4461,-1455){\makebox(0,0)[b]{\smash{{\SetFigFont{8}{9.6}{\familydefault}{\mddefault}{\updefault}{\color[rgb]{0,0,0}$g$}%
}}}}
\put(4818,-1455){\makebox(0,0)[b]{\smash{{\SetFigFont{8}{9.6}{\familydefault}{\mddefault}{\updefault}{\color[rgb]{0,0,0}$f$}%
}}}}
\put(4461,-1931){\makebox(0,0)[b]{\smash{{\SetFigFont{8}{9.6}{\familydefault}{\mddefault}{\updefault}{\color[rgb]{0,0,0}$A$}%
}}}}
\put(4699,-1931){\makebox(0,0)[b]{\smash{{\SetFigFont{8}{9.6}{\familydefault}{\mddefault}{\updefault}{\color[rgb]{0,0,0}$A$}%
}}}}
\put(3013,-1453){\makebox(0,0)[b]{\smash{{\SetFigFont{8}{9.6}{\familydefault}{\mddefault}{\updefault}{\color[rgb]{0,0,0}$g$}%
}}}}
\put(3251,-1453){\makebox(0,0)[b]{\smash{{\SetFigFont{8}{9.6}{\familydefault}{\mddefault}{\updefault}{\color[rgb]{0,0,0}$g$}%
}}}}
\put(3370,-1869){\makebox(0,0)[b]{\smash{{\SetFigFont{8}{9.6}{\familydefault}{\mddefault}{\updefault}{\color[rgb]{0,0,0}$B$}%
}}}}
\put(3489,-1334){\makebox(0,0)[b]{\smash{{\SetFigFont{8}{9.6}{\familydefault}{\mddefault}{\updefault}{\color[rgb]{0,0,0}$h$}%
}}}}
\put(4818,-1158){\makebox(0,0)[b]{\smash{{\SetFigFont{8}{9.6}{\familydefault}{\mddefault}{\updefault}{\color[rgb]{0,0,0}$h$}%
}}}}
\put(6247,-1158){\makebox(0,0)[b]{\smash{{\SetFigFont{8}{9.6}{\familydefault}{\mddefault}{\updefault}{\color[rgb]{0,0,0}$h$}%
}}}}
\end{picture}%
\end{center}
\end{example}

\noindent
As expected, these grammars generate the rational languages:

\begin{theorem}[\cite{ra97}]
  A language of tuples of terms is rational if and only if it is the
  language generated by a grammar.
\end{theorem}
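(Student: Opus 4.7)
The proof establishes both implications by Kleene-style arguments adapted to the hyperedge replacement framework.

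For the direction from rational to grammar, I proceed by structural induction on the rational expression. A finite language $\{t^{(1)},\ldots,t^{(m)}\}$ of $n$-tuples is generated by a single non-terminal $A$ of arity $n$ with one production $A \to t^{(i)}$ per tuple. Closure under union is immediate: disjointly combine two grammars with axioms $A$ and $B$ and add a fresh axiom $C$ together with productions $C \to A$ and $C \to B$. For the product $L \cdot_x M$, assuming grammars $G_L$ and $G_M$ with axioms $A$ (arity $n$) and $B$ (arity $|x|$), I combine the rule sets and declare a fresh axiom with a single production pasting an instance of $A$ together with instances of $B$ attached to the $x$-positions, each attachment obeying the leaf-linked-forest discipline of the grammar definition. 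For the iteration $L^{*_x}$, I introduce a fresh non-terminal $C$ of arity $|x|$ with two productions: a stop rule $C \to x$ that emits the variable tuple itself, and a continue rule expanding $C$ to an instance of $A$ with each $x$-position attached to a fresh instance of $C$. A straightforward induction on derivation length shows that the resulting grammar produces exactly the intended language while preserving linearity and the attachment constraints.

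For the converse direction, I associate to each non-terminal $A^{(i)}$ of $G$ an unknown language $L_i$ and form the system $L_i = \bigcup_{(A^{(i)},\alpha) \in G} \alpha_t \cdot_{y_i} (L_{j_1}, \ldots, L_{j_{r_i}})$, where $\alpha_t$ is the finite terminal-forest part of $\alpha$ and the $L_{j_\ell}$'s correspond to its non-terminal hyperedges, with $y_i$ the tuple of attachment variables. I then eliminate unknowns one at a time using a multi-variable analogue of Arden's identity: an equation of the form $L = P \cdot_x L \cup Q$ with $L$ not occurring in $Q$ has least solution $L = P^{*_x} \cdot_x Q$. Substituting the solved expression back into the remaining equations and iterating yields a closed rational expression for $L_1$, which generates the language of the grammar.

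The main obstacle lies in this second direction. Because a single non-terminal may occur several times in a given right-hand side and its occurrences are tied to distinct attachment positions, the equations are genuinely nonlinear in the $L_i$'s, and Arden's identity must be stated and applied via the multi-substitution operator $\cdot_x$ rather than ordinary concatenation. Careful bookkeeping of the variable tuples $y_i$ is required throughout elimination to ensure that arities and component orderings remain compatible, so that the final rational expression evaluates to the intended grammar language.
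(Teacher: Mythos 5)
First, a point of comparison: the paper does not prove this statement at all --- it is imported verbatim from Raoult's paper \cite{ra97}, so there is no in-paper proof to measure you against. Your Kleene-style strategy (structural induction on rational expressions for one direction, a system of equations per non-terminal solved by Gaussian elimination with a multi-variable Arden identity for the other) is the natural one and is essentially the route of \cite{ra97} itself. The converse direction is sound in outline: the grammar language is the least solution of the associated polynomial system, and the identity that $L = P \cdot_x L \cup Q$ has least solution $P^{*_x} \cdot_x Q$ does hold in this setting precisely because $\cdot_x$ replaces distinct instances of $x$ by possibly distinct elements, which matches the independent expansion of distinct instances of a non-terminal in a hyperedge replacement derivation. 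Since the equation is genuinely polynomial (not linear) when $x$ has several instances in $P$, this identity is not literally Arden's lemma and deserves an explicit statement and a fixed-point proof; you flag the difficulty but do not discharge it.

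The one step that fails as written is your construction for the product and the iteration. In $L \cdot_x M$ the instances of $x$ live in the \emph{generated tuples}, hence in the terminal parts of the right-hand sides of the productions of $G_L$ --- not at the axiom. A fresh axiom production ``pasting an instance of $A$ together with instances of $B$ attached to the $x$-positions'' is not well-formed: the right-hand side of such a production contains only the hyperedge $A$, and the $x$-positions do not exist until $A$ has been expanded, possibly many steps later and with a number of $x$-instances that varies from tuple to tuple. The correct construction grafts a fresh instance of the axiom $B$ of $G_M$ onto every terminal instance of $x$ occurring in every production of $G_L$; since these occurrences are variable leaves, the leaf-linked-forest discipline and linearity are preserved. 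The same repair is needed for $L^{*_x}$: the fresh non-terminal $C$ must replace the $x$-instances \emph{inside} the productions of $G_L$, with the stop rule $C \to x$ realizing the $n=0$ term of $\bigcup_{n \geq 0} L^{n_x}$. With that repair, and an explicit proof of the multi-instance Arden identity, your argument goes through.
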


A rational language of $n$-tuples of terms can also be seen as a
binary relation in $T(F)^p \times T(F)^q$, where $p+q = n$. In this
case, given a non-terminal $A$, we define the first and second
projections $\pi_1(A)$ and $\pi_2(A)$ by the set of variables of $A$
referring to the first (resp. second) projection of the relation. A
similar notation is used for right-hand sides of grammar productions
as well.  For clarity, we write a production $(A,\alpha)$ as
$(A,\pi_1(\alpha) \times \pi_2(\alpha))$. Without loss of generality,
we always consider that $A = \pi_1(A)\pi_2(A)$ and $\alpha =
\pi_1(\alpha)\pi_2(\alpha)$. For example, if the axiom of a $T(F)^p
\times T(F)^q$ relation is $A = A_1 \ldots A_n$, we can have $\pi_1(A)
= A_1 \ldots A_p$ and $\pi_2(A) = A_{p+1} \ldots A_{p+q = n}$.

\begin{example}
  \label{ex:gr_rel}
  Grammar $G_1$ from Ex. \ref{ex:gr} generates, from non-terminal $A$,
  a language $L(G_1,A)\ \in\ \mathit{Rat}_2$, which can be seen as a
  $T(F) \times T(F)$ relation. In this case, its rules can be written
  \begin{align*}
    A\ \lra &\ a\ \times\ a\ |\ gA_1\ \times\ gA_1\ |\ fA_1^1A_1^2\ 
    \times\ fA_2^1A_2^2\ |\ fB_1B_2\ \times\ B_3
    \\
    B\ \lra &\ A_1^1\ A_1^2\ \times\ fA_2^1A_2^2\ |\ gB_1\ gB_2\ 
    \times\ hB_3
  \end{align*}
\end{example}

\section{Term Rewriting Systems}
\label{sec:trs}

A (term) \emph{rewrite rule} is a pair $(l,r) \in T(F,X)^2$ such that
$\mathit{Var}(r)\ \subseteq \ \mathit{Var}(l)$. A rewrite rule $(l,r)$
is said to be \emph{linear} if both $l$ and $r$ are.  A \emph{rewrite
  system}, or more specifically \emph{term rewriting system} is a set
of rewrite rules $R$. A system $R$ is \emph{finite} when $|R|$ is
finite, and \emph{recognizable} when the potentially infinite number
of rules is given as a finite union of pairs $U \rightarrow V$, where
$U$ and $V$ are recognizable term languages. Note that we only
consider systems where the total number of distinct variables is
finite. A system is \emph{linear} when all its rules are linear.  We
denote by $\mathit{Dom}(R)$ (resp. $\mathit{Ran}(R)$) the set of
left-hand sides (resp. right-hand sides) of $R$, up to a renaming of
the variables. The \emph{rewriting} according to a system $R$ is the
relation
$$
\era{R}{}\ :=\ \{ (c[l\sigma],c[r\sigma]) \in T(F) \times T(F)\ |\ 
(l,r) \in R\ \land\ c \in C_1(F)\ \land\ \sigma \in S(F,X)\}.
$$
In case we want to specify that a rule $(l,r)$ is used at some
position $p$ (resp. set of positions $P$), we use the notation
$\rw{l,r}{p}$ (resp.  $\rw{l,r}{P}$). The reflexive and transitive
closure of $\ra{R}{}$ by composition is called the \emph{derivation}
of $R$ and written $\ra{R}{*}$.

\paragraph{Classification of Rewriting Systems.}
In the case of words, several natural classes of rewriting systems can
be distinguished by the way their rules are allowed to overlap. In
\cite{ca00}, the composition $\ra{R}{}\ \circ\ \ra{R}{}$ of two
rewritings is considered, and all the different possibilities of
overlapping between the right-hand side of the first rewrite rule, and
the left-hand side of the second one are examined. By discarding
systems where unwanted overlappings occur, one obtains four general
families of systems whose derivation is proven rational, the families
of \emph{left}, \emph{right}, \emph{prefix} and \emph{suffix} word
rewriting systems.  Moreover, any system which does not belong to one
of these families may have a non-rational derivation. As a
consequence, as terms generalize words, we only need to study the
extension of these four families of systems to terms: the classes of
\emph{bottom-up} and \emph{top-down} systems, which respectively
correspond to left and right systems, and the families of
\emph{prefix} and \emph{suffix} systems. A term rewriting system $R$
(resp.  its inverse $R^{-1}$) is said:
\begin{itemize}
\item \emph{top-down} (resp.  \emph{bottom-up}) if any overlapping
  between a right-hand side $r$ and a left-hand side $l$ of $R$ (resp.
  $R^{-1}$) is such that $r = \bar{r}[o]$ and $l = o\lambda$ for some
  (possibly trivial) $1$-context $\bar{r}$ and substitution $\lambda$,
\item \emph{prefix} if any overlapping between a right-hand side $r$
  and a left-hand side $l$ of $R$ is such that $l = r \lambda$ or $r =
  l \rho$ for some possibly trivial substitutions $\lambda$ and
  $\rho$,
\item \emph{suffix} if any overlapping between a right-hand side $r$
  and a left-hand side $l$ of $R$ is such that $l = \bar{l}[r]$ or $r
  = \bar{r}[l]$ for some possibly trivial $1$-contexts $\bar{r}$ and
  $\bar{l}$.
\end{itemize}
The following picture illustrates these four kinds of overlappings:
\begin{center}
  \begin{picture}(0,0)%
\includegraphics{graphics/trs2.pstex}%
\end{picture}%
\setlength{\unitlength}{3947sp}%
\begingroup\makeatletter\ifx\SetFigFont\undefined%
\gdef\SetFigFont#1#2#3#4#5{%
  \reset@font\fontsize{#1}{#2pt}%
  \fontfamily{#3}\fontseries{#4}\fontshape{#5}%
  \selectfont}%
\fi\endgroup%
\begin{picture}(5199,1184)(214,-483)
\put(3129,-447){\makebox(0,0)[b]{\smash{{\SetFigFont{8}{9.6}{\familydefault}{\mddefault}{\updefault}{\color[rgb]{0,0,0}prefix}%
}}}}
\put(4707,-447){\makebox(0,0)[b]{\smash{{\SetFigFont{8}{9.6}{\familydefault}{\mddefault}{\updefault}{\color[rgb]{0,0,0}suffix}%
}}}}
\put(4707,184){\makebox(0,0)[b]{\smash{{\SetFigFont{8}{9.6}{\familydefault}{\mddefault}{\updefault}{\color[rgb]{0,0,0}{\small or}}%
}}}}
\put(3129,184){\makebox(0,0)[b]{\smash{{\SetFigFont{8}{9.6}{\familydefault}{\mddefault}{\updefault}{\color[rgb]{0,0,0}{\small or}}%
}}}}
\put(2719,217){\makebox(0,0)[b]{\smash{{\SetFigFont{8}{9.6}{\familydefault}{\mddefault}{\updefault}{\color[rgb]{0,0,0}$r$}%
}}}}
\put(2757,-125){\makebox(0,0)[b]{\smash{{\SetFigFont{8}{9.6}{\familydefault}{\mddefault}{\updefault}{\color[rgb]{0,0,0}$\lambda$}%
}}}}
\put(3495,-114){\makebox(0,0)[b]{\smash{{\SetFigFont{8}{9.6}{\familydefault}{\mddefault}{\updefault}{\color[rgb]{0,0,0}$\rho$}%
}}}}
\put(3483,204){\makebox(0,0)[b]{\smash{{\SetFigFont{8}{9.6}{\familydefault}{\mddefault}{\updefault}{\color[rgb]{0,0,0}$l$}%
}}}}
\put(4290,246){\makebox(0,0)[b]{\smash{{\SetFigFont{8}{9.6}{\familydefault}{\mddefault}{\updefault}{\color[rgb]{0,0,0}$\bar{r}$}%
}}}}
\put(5048,228){\makebox(0,0)[b]{\smash{{\SetFigFont{8}{9.6}{\familydefault}{\mddefault}{\updefault}{\color[rgb]{0,0,0}$\bar{l}$}%
}}}}
\put(5122,-131){\makebox(0,0)[b]{\smash{{\SetFigFont{8}{9.6}{\familydefault}{\mddefault}{\updefault}{\color[rgb]{0,0,0}$r$}%
}}}}
\put(4372,-144){\makebox(0,0)[b]{\smash{{\SetFigFont{8}{9.6}{\familydefault}{\mddefault}{\updefault}{\color[rgb]{0,0,0}$l$}%
}}}}
\put(1773,103){\makebox(0,0)[b]{\smash{{\SetFigFont{8}{9.6}{\familydefault}{\mddefault}{\updefault}{\color[rgb]{0,0,0}$o$}%
}}}}
\put(574,103){\makebox(0,0)[b]{\smash{{\SetFigFont{8}{9.6}{\familydefault}{\mddefault}{\updefault}{\color[rgb]{0,0,0}$o$}%
}}}}
\put(492,418){\makebox(0,0)[b]{\smash{{\SetFigFont{8}{9.6}{\familydefault}{\mddefault}{\updefault}{\color[rgb]{0,0,0}$\bar{r}$}%
}}}}
\put(632,-168){\makebox(0,0)[b]{\smash{{\SetFigFont{8}{9.6}{\familydefault}{\mddefault}{\updefault}{\color[rgb]{0,0,0}$\lambda$}%
}}}}
\put(1691,418){\makebox(0,0)[b]{\smash{{\SetFigFont{8}{9.6}{\familydefault}{\mddefault}{\updefault}{\color[rgb]{0,0,0}$\bar{l}$}%
}}}}
\put(1831,-168){\makebox(0,0)[b]{\smash{{\SetFigFont{8}{9.6}{\familydefault}{\mddefault}{\updefault}{\color[rgb]{0,0,0}$\rho$}%
}}}}
\put(605,-447){\makebox(0,0)[b]{\smash{{\SetFigFont{8}{9.6}{\familydefault}{\mddefault}{\updefault}{\color[rgb]{0,0,0}top-down}%
}}}}
\put(1804,-447){\makebox(0,0)[b]{\smash{{\SetFigFont{8}{9.6}{\familydefault}{\mddefault}{\updefault}{\color[rgb]{0,0,0}bottom-up}%
}}}}
\end{picture}%

\end{center}
Prefix and suffix systems respectively generalize root and ground
rewriting systems. Root rewriting systems are already known to be very
powerful: indeed, they can simulate the execution steps of Turing
machines. This implies a direct negative result concerning prefix
systems.

\begin{proposition}
  Some linear prefix tree rewriting systems have a non rational
  derivation.
\end{proposition}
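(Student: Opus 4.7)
The strategy is indicated by the remark preceding the statement: \emph{root} term rewriting systems (those whose rules apply only at the root of the term) already suffice to simulate Turing machines, and I plan to exhibit one that is at the same time linear and satisfies the prefix overlap discipline. Since the derivation relation of a rational system is generated by a finite grammar, and membership in such a grammar is decidable, any linear prefix system simulating an undecidable reachability problem immediately yields a non-rational (in fact non-recursive) derivation, which is what we want.

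Concretely, I would fix a Turing machine $M$ and encode each of its configurations as a ground term of the form $\mathrm{cfg}(q,a,L,R)$, where $q$ is a constant standing for the current state, $a$ the currently scanned symbol, and $L,R$ are encodings of the left and right portions of the tape as lists built from a single binary constructor $\mathrm{cons}$, padded at their ends by a distinguished symbol so that every nonempty tape side starts with $\mathrm{cons}$. Each transition of $M$ then becomes a single linear rewrite rule of the shape $\mathrm{cfg}(q,a,\mathrm{cons}(b,\ssq_1),\mathrm{cons}(c,\ssq_2))\to\mathrm{cfg}(q',a',L',R')$, where $L',R'$ are built from $\ssq_1,\ssq_2$ and the inspected symbols according to the transition. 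All such rules share one and the same outer skeleton, inspect the same positions, and do so to the same uniform depth; under this discipline, whenever a right-hand side $r$ and a left-hand side $l$ are unifiable at the root, their unification forces agreement of the state and of the inspected symbols, and the remaining fresh variables align so that either $l = r\lambda$ or $r = l\rho$ holds. The system is therefore prefix, and linearity is immediate by construction.

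Finally, suppose for contradiction that $\ra{R}{*}$ were rational; by the theorem recalled in the previous section it would be generated by a finite grammar $G$, and given any ground pair $(s,t)$ we could decide whether $(s,t)\in L(G)$, hence whether $s\ra{R}{*} t$. Specialising $s$ to an initial configuration of $M$ on input $w$ and $t$ to a halting configuration would yield a decision procedure for the halting problem of $M$, a contradiction. The main obstacle is purely in the construction step: a naive encoding of Turing or counter-machine transitions tends to produce pairs of rules whose right- and left-hand sides are unifiable yet incomparable under the instance preorder (a right-move and a left-move inspect opposite sides of the tape to different depths, so neither pattern subsumes the other), which violates the prefix condition. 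The decisive technical device is to force \emph{every} rule to inspect exactly the same positions to the same depth on both sides, if necessary by breaking each original transition into a small sequence of auxiliary-state micro-steps each of which touches only a single position; this restores the prefix discipline while preserving Turing-completeness.
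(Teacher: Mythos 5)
Your overall strategy coincides with the paper's: encode a Turing machine as a linear root rewriting system, note that all overlaps between right- and left-hand sides then occur at the root, and conclude that the derivation is non-recursive (hence non-rational) because deciding it would decide the reachability relation of the machine. The paper's concrete encoding differs in detail (a binary state symbol whose two arguments carry the left tape and the scanned-symbol-plus-right-tape as chains of unary symbols, with an overloaded blank), but that difference is inessential; the final undecidability argument is identical.

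The gap lies in the only non-routine step, namely the verification that the simulating system really satisfies the prefix overlap discipline. In your second paragraph you assert that for the $\mathrm{cfg}(q,a,L,R)$ encoding every root unification of a right-hand side $r$ with a left-hand side $l$ yields $l = r\lambda$ or $r = l\rho$; this is false, and your own third paragraph concedes it: a right-move right-hand side such as $\mathrm{cfg}(q',c,\mathrm{cons}(a'',\mathrm{cons}(b,\ssq_1)),\ssq_2)$ unified with a left-hand side $\mathrm{cfg}(q',c,\mathrm{cons}(b',\ssq_1'),\mathrm{cons}(c',\ssq_2'))$ instantiates a variable on \emph{each} side ($\ssq_1'$ in $l$, $\ssq_2$ in $r$), so neither term is an instance of the other. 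The repair you then invoke --- micro-steps each touching a single position --- is never carried out, and as stated it does not suffice: a rule whose right-hand side deepens the left-tape argument and a rule whose left-hand side deepens the right-tape argument still unify at the root with variables instantiated on both sides whenever their state and scanned-symbol components agree, so you would additionally have to organise the auxiliary states so that such pairs never meet. Since the entire content of the proposition is precisely that Turing-completeness survives the prefix restriction, deferring exactly this verification leaves the proof incomplete; you need to exhibit the micro-step system explicitly and check its finitely many overlap configurations, or adopt an encoding for which that check can be made rule by rule.
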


\begin{proof}
  Let $M$ be a Turing machine with a set of states $Q$, a tape
  alphabet $P$ and a set of transition rules $T \subseteq (Q \times P
  \cup \{\#\} \rightarrow Q \times P \times \{+,-\})$ ($\#$ denotes
  the `blank' character). Let us build a prefix system $R_M$ on the
  alphabet $Q \cup P \cup \{\#\}$, with variables in $\{x,y\}$, where
  $Q$ is considered binary, $P$ unary and $\#$ an overloaded symbol of
  arity either $0$ or $1$.\\
  For all $pA \rightarrow qB+\ \in\ T$, $R_M$ has a rule $pxAy
  \rightarrow qBxy$, plus a rule $px\# \rightarrow qBx\#$ if $A = \#$.
  \\
  For all $pA \rightarrow qB-\ \in\ T$ and $C\ \in\ P$, $R_M$ has
  rules $pCxAy \rightarrow qxCBy$ and $p\#Ay \rightarrow p\#\#By$,
  plus rules $p\#\# \rightarrow q\#\#B\#$ and $pCx\# \rightarrow
  qxCB\#$ if $A = \#$.
  \\
  This system has both overlappings of the kind $l = r\sigma$ and of
  the kind $r = l\sigma$, for some left and right-hand sides $l$ and
  $r$ and substitution $\sigma$. It is thus prefix, and neither
  top-down, bottom-up or suffix in general. It is quite clear that
  computing the derivation of $R_M$ is equivalent to computing the
  reachability relation of $M$, thus is undecidable. Hence $\ra{R}{*}$
  is non-recursive and can obviously not be rational. \qed
\end{proof}

However, contrary to the case of words, where prefix and suffix
systems are dual and share the same properties, the situation is
different in the case of terms. The family of ground rewriting
systems, which is a sub-family of suffix systems, has already been
studied by several authors. In particular, Dauchet and Tison
\cite{dt85} showed that the derivations of ground systems can be
recognized by a certain type of composite automata called \emph{ground
  tree transducers} (GTT). Section \ref{sec:suffix} will use similar
arguments in order to prove that, more generally, any suffix system
has a rational derivation. The two remaining families of term
rewriting systems we consider, namely top-down and bottom-up systems,
are dual. The next section puts focus on top-down systems, but all the
results extend to the bottom-up case (see Corollary \ref{cor:asc}).

\section{Derivation of Bottom-Up and Top-Down Systems}
\label{sec:desc}

This section focuses on the study of top-down term rewriting systems
and their derivations. For any finite linear top-down system, a
grammar of tuples of terms generating its derivation relation can be
built, which implies that this relation is rational. Furthermore, from
the shape of the grammar, we observe that the derivation of such a
system preserves the recognizability of term languages. Dual results
can be obtained for bottom-up systems: the derivation of a linear
bottom-up system is rational, and the inverse image of a recognizable
term language is still recognizable.

Let us first observe that top-down systems enjoy a kind of
\emph{monotonicity} feature. Any rewriting sequence of such systems is
equivalent to a sequence where the successive rewriting steps occur at
non-decreasing positions in the input term. We call this
\emph{top-down rewriting}. Let $R$ be a term rewriting system, we
define its top-down rewriting $\hra{R}{*}$ by:
$$
\ehra{R}{*}\ =\ \bigcup_{n \geq 0} \ehra{R}{n} \mbox{ with }
\begin{cases}
  \ehra{R}{0}\ =\ \mathit{Id}_{T(F)} \\[2ex]
  \ehra{R}{n}\ =\ \bigcup_{p_1,\ldots,p_n} \rw{u_1,v_1}{p_1} \circ
  \ldots \circ \rw{u_n,v_n}{p_n}
\end{cases}
$$
such that the rewriting positions do not decrease along indexes
($\forall i,j,\ i<j\ \Rightarrow\ \lnot (p_j < p_i)$), and if two
successive positions are equal then the second rewriting should not
have a trivial left-hand side ($(p_i = p_{i-1})\ \Rightarrow\ (u_i
\notin X)$). This last condition means that, for instance, the
sequence
$$
c[l\sigma] \rw{l,r}{} c[r\sigma] \rw{x,r'}{} c[r'\{x \mapsto
r\sigma\}]
$$
is not top-down, because the second rule produces its right-hand
side `higher' than the first one. The rewriting steps should be
swapped to obtain the top-down sequence
$$
c[l\sigma] \rw{x,r'}{} c[r'\{x \mapsto l\sigma\}]
\rw{l,r}{\mathit{pos}(x,c[r'])} c[r'\{x \mapsto r\sigma\}].
$$
The next lemma expresses the fact that, given any rewriting
sequence of a top-down system, rewriting steps can always be ordered
into an equivalent top-down sequence.

\begin{lemma}
  \label{lem:top-down}
  The relations of derivation and top-down derivation of any
  top-down term rewriting system $R$ coincide: $ \ra{R}{*}\ =\ 
  \hra{R}{*}$.
\end{lemma}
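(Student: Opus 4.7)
The inclusion $\hra{R}{*}\ \subseteq\ \ra{R}{*}$ is immediate from the definitions, so only the reverse direction requires work. My plan is to first establish a local swap lemma for two consecutive non-top-down steps, and then lift it to arbitrary derivations through a well-founded induction on an appropriate sequence measure.

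For the local swap, I would consider a two-step derivation $t \rw{l_1,r_1}{p_1} t_1 \rw{l_2,r_2}{p_2} t_2$ whose transition pair violates the top-down order, meaning either $p_2 < p_1$, or $p_1 = p_2$ and $l_2 \in X$, and I would exhibit a top-down sequence from $t$ to $t_2$. The argument splits into two cases. In the case $p_2 < p_1$, the match of $l_2$ at $p_2$ in $t_1$ covers the freshly produced subterm $r_1\sigma_1$ located at $p_1$. The top-down hypothesis restricts any overlap between $r_1$ and $l_2$ to patterns of the form $r_1 = \bar{r}[o]$ with $l_2 = o\lambda$, which precisely disallows $l_2$'s rigid structure from extending down so as to contain the root of $r_1$ strictly from above. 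Consequently, the relative position $p_1 - p_2$ must lie strictly below a variable $x$ of $l_2$, so $l_2$ already matched at $p_2$ in the original term $t$, with $l_1\sigma_1$ sitting under the corresponding instance of $x$. I would then apply $(l_2,r_2)$ first at $p_2$ in $t$, and follow with one application of $(l_1,r_1)$ at each position where $x$ occurs in $r_2$; all these later positions lie at or strictly below $p_2$, and after the first step the descendants of $p_2$ are only reached by strict descents, yielding a top-down sequence ending at $t_2$. In the case $p_1 = p_2$ with $l_2 = x \in X$, the argument is analogous: I would first apply $(x,r_2)$ at $p_1$ in $t$, then apply $(l_1,r_1)$ at each copy of $l_1\sigma_1$ appearing under the instances of $x$ inside $r_2$.

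To pass from the local swap to the global statement, I would locate the leftmost non-top-down transition in a given derivation, replace it using the swap lemma, and observe that the leftmost non-top-down transition in the new sequence strictly moves to the right. The hard part, as I anticipate, is that a single swap may lengthen the derivation when $x$ occurs multiply in $r_2$, so the termination argument cannot rest solely on sequence length. A suitable well-founded measure is a lexicographic pair whose first component is the index of the leftmost non-top-down transition and whose second component is, for instance, the multiset of tail positions ordered by the prefix order on $\mathbb{N}^*$; every local swap strictly decreases this measure, yielding the desired top-down derivation in finitely many iterations.
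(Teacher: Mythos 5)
Your local swap is essentially the paper's argument: the paper proves four inclusions of the shape $\rw{u,v}{>p}\circ\rw{u',v'}{p}\ \subseteq\ \rw{u',v'}{p}\circ\rw{u,v}{2^{\geqslant p}}$ (and variants), and the justification is exactly yours — the top-down overlap restriction forces $l_2$ to meet $r_1\sigma_1$ only through one of its variables, so $l_2$ already matched in $t$ and $(l_1,r_1)$ can be replayed under every occurrence of that variable in $r_2$. Two caveats on this part. You should package the replayed applications of $(l_1,r_1)$ as a single parallel step at a \emph{set} of positions (the paper's $\rw{u,v}{2^{>p}}$); this keeps a swap length-preserving, which any sorting argument needs. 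And the degenerate situations where left- or right-hand sides are bare variables require separate treatment: the paper first discards rules in $X\times X$, and its fourth inclusion has two disjuncts precisely because when $u\in X$, $v'\in X$ and the relevant binding is trivial, the swapped second step lands \emph{at} $p$ rather than strictly below it. Your sketch does not reach these corner cases.

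The genuine gap is in the termination argument. The claim that the leftmost non-top-down transition strictly moves to the right after a swap is false. Take the top-down system $\{a\to b,\ fxy\to hxy\}$ and the derivation $gfaa \rw{a,b}{11} gfba \rw{a,b}{12} gfbb \rw{fxy,hxy}{1} ghbb$, with positions $11,\,12,\,1$: only the third step violates the order (since $1<11$ and $1<12$), so you swap steps $2$ and $3$ and obtain positions $11,\,1,\,12$ — but now the step at $1$ sits at index $2$ and still violates against the step at $11$, so the leftmost violation has moved \emph{left}. The secondary multiset component does not rescue this either: the positions $p_2\cdot o\cdot q''$ created by a swap need not lie below $p_1$, so they are not smaller (for the prefix order) than the element they replace. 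There is also a second elision: since the prefix order is only partial, a sequence can violate the top-down condition at a non-adjacent pair while every adjacent pair is fine (e.g.\ positions $11,\,2,\,1$), so your algorithm must also commute independent steps at incomparable positions, a case your swap lemma (which assumes $p_2<p_1$ or $p_2=p_1$) does not cover. To be fair, the paper itself only gestures at ``an alternative version of bubble sort''; but since you isolated termination as the hard part and committed to a specific measure, that measure has to decrease, and it does not. A working alternative is to induct on the number of (parallel) steps and insert a step with prefix-minimal position at the front, or to count inversions $|\{(i,j)\ |\ i<j\ \land\ p_j<p_i\}|$ while always resolving the violation with the smallest second index.
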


We are now ready to prove the rationality of the derivation of any
top-down rewriting system. Using this property of top-down
systems, it is possible to build a grammar which directly generates
the derivation of any such system. This grammar mimics the way a
rational word transducer works, using its control state to keep in
memory a finite subterm already read or yet to produce.

\begin{theorem}
  \label{th:top-down}
  Every finite linear top-down term rewriting system $R$ has a
  rational derivation.
\end{theorem}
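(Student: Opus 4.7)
The plan is to invoke Lemma \ref{lem:top-down} to replace $\ra{R}{*}$ by the top-down derivation $\hra{R}{*}$, and then to build a hyperedge replacement grammar $G$ whose binary axiom $A$ generates $\hra{R}{*}$, viewed as a language of $2$-tuples of trees in $\mathit{Rat}_2$. The grammar should mimic the top-down walk of a rewriting sequence: at each position one applies a (possibly chained) sequence of rules before recursing into the subtrees, in non-decreasing order, exactly as described before Lemma~\ref{lem:top-down}.

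Concretely, I would use three families of non-terminals. The axiom $A$ is binary, with $\pi_1(A)$ and $\pi_2(A)$ of arity $1$ each. A unary non-terminal $U$ generating $T(F)$ as a rational language of $1$-tuples is used to absorb subtrees sitting at variables of a left-hand side that do not appear in the corresponding right-hand side. For every non-variable subterm $o$ of a right-hand side of $R$ that may act as a top-down overlap (that is, appears as the shared $o$ in some pattern $r = \bar{r}[o]$, $l = o\lambda$ relating rules of $R$), an auxiliary non-terminal $B_o$ is introduced to encode intermediate states of a chain of root-level rewrites. The productions then split into: one \emph{congruence} production $A \lra f A_1^1 \ldots A_n^1 \times f A_1^2 \ldots A_n^2$ per $f \in F_n$, covering the case where no rule fires at the root; one \emph{rewrite} production per rule $(l,r) \in R$, of the form $A \lra l[\ldots] \times r[\ldots]$, where shared variables of $l$ and $r$ are linked through fresh instances of $A$ and $l$-only variables through instances of $U$; and \emph{chain} productions that use the $B_o$'s to stitch together successive root-level rewrites whose overlaps match the top-down pattern.

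The main obstacle is precisely this root-chaining. After a first step $l_1 \sigma_1 \lra r_1 \sigma_1$, any further rewrite whose position falls inside the skeleton of $r_1$ must, by the top-down overlap condition, use a left-hand side of the form $o\lambda$ with $o$ a subterm of $r_1$. Since $R$ is finite and linear, only finitely many such subterms $o$ can ever appear, which is exactly what keeps the family $\{B_o\}$ finite and the grammar itself a finite object; the Kleene-star-like behaviour of non-terminals under iterated substitution then allows chains of unbounded length to be generated.

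Once the grammar is in place, I would prove $L(G, A) = \hra{R}{*}$ in two directions. \emph{Soundness} is shown by induction on grammar derivations, reading off each derivation a valid top-down rewriting sequence whose positions are determined by the productions fired. \emph{Completeness} is shown by induction on the length of a top-down sequence: one peels off either an initial congruence step, a single root rewrite followed by independent derivations inside the variable subtrees, or a maximal initial chain of root rewrites which is reconstructed through a $B_o$ path before invoking the induction hypothesis on each resulting variable subtree. Completeness is the delicate direction, because one has to check that every top-down sequence admits a decomposition matching the shape of some grammar derivation, and that is where the finiteness of the set of admissible overlaps $o$ and the precise form of the top-down overlap condition are both essential.
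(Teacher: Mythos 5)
Your construction is essentially the paper's: it too reduces to top-down derivations via Lemma~\ref{lem:top-down} and builds a grammar whose non-terminals are indexed by the finite set $O$ of overlaps between right- and left-hand sides of $R$, with congruence productions per symbol of $F$, per-rule productions threading shared variables through overlap non-terminals and discarded variables through a universal unary non-terminal $\nt{*}$ (your $U$), and overlap-growing/splitting productions playing exactly the role of your chain productions through the $B_o$'s. The soundness/completeness argument you outline is in fact more than the paper supplies, which stops at the construction itself.
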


\begin{proof}
  Let $R$ be a finite linear top-down system. We denote by $O$ the
  set of all overlappings between left and right parts of rules of
  $R$:
  \begin{multline}
  O = \{\ t \in C_n(F,X)\ |\ \exists\ s \in C_1(F,X), \\ u \in
  T(F,X)^n, s[t] \in \mathit{Ran}(R)\ \land\ t[u] \in \mathit{Dom}(R)\ 
  \}.
  \end{multline}
  Remark that $\ssq$ belongs to $O$. We will now build a grammar $G$
  whose language is exactly the derivation of $R$.  Its finite set of
  non-terminals is $\{\nt{*}\} \cup Q$, where $Q = \{\ \nt{t}\ =\ 
  \nt{t}_1 \ldots \nt{t}_{n+1}\ |\ t \in O\ \cap\ C_n(F)\} $ and, for
  all $\nt{t} \in Q$, $\pid{t}$ is a single variable. The production
  rules of $G$ are of four types.
  \\
  Type (1): $\forall\ f \in F_n,$
  \begin{align*}
    \nt{\ssq} &\ \rightarrow\ f\nt{\ssq}^1_1 \ldots \nt{\ssq}^n_1\ 
    \times\ f\nt{\ssq}^1_2 \ldots \nt{\ssq}^n_2
    \\
    \nt{*} &\ \rightarrow\ f\nt{*}^1 \ldots \nt{*}^n\\
    \intertext{Type (2): $\forall\ t \in O \cap C_n(F),\ t[u] \in O
      \cap C_m(F),$} \nt{t} &\ \rightarrow\ u[\piu{t[u]}]\ \times\ 
    \pid{t[u]}
    \\
    \intertext{Type (3): $\forall\ t[u] \in O,\ t \in O \cap
      C_\ell(F)$ (necessarily $\{u_1, \ldots u_\ell\} \subseteq O$),}
    \nt{t[u]} &\ \rightarrow\ \piu{u_1} \ldots \piu{u_\ell} \times\ 
    t[\pid{u_1} \ldots \pid{u_\ell}]
    \\
    \intertext{Type (4): $\forall\ (t[u],s[v]) \in R,\ v = v_1 \ldots
      v_\ell$ (necessarily $\{t,v_1, \ldots v_n\} \subseteq O$),}
    \nt{t} &\ \rightarrow\ u \sigma\ \times\ s[\pid{v_1} \ldots
    \pid{v_\ell}]
  \end{align*}
  where $\sigma$ is a variable renaming such that for any variable $x$
  of $u$, $\sigma(x) =\ \nt{v_i}_j$ if $x$ is the $j$-th variable to
  appear in $v_i$ (from left to right), and $\sigma(x) =\ \nt{*}$ if
  $x$ does not appear in any of the $v_i$. Figure \ref{fig:grdesc}
  illustrates the four types of rules.
  
  Intuitively, the role of this substitution is to gather into the
  same non-terminal or hyperedge all the variables of $u$ belonging to
  the same $v_i$, while respecting the order in which these variables
  appear in $v_i$.  This way, a correct instantiation of non-terminals
  of $G$ is ensured. If a variable of $u$ does not appear at all in
  $v$, then it means that a whole input subtree is `discarded' by the
  rewriting rule being applied. Thus the grammar should accept any
  subtree to be generated at this position, which is the role of the
  unary non-terminal $\nt{*}$.
  
  For simplicity, we will only consider type (4) rules in which
  $t,v_1,\ldots v_n$ are \emph{maximal}. The other cases can be
  simulated by suitable finite compositions of rules of types (2), (3)
  and (4). \qed
\end{proof}

\begin{figure}[btp]
  \begin{center}
    \input{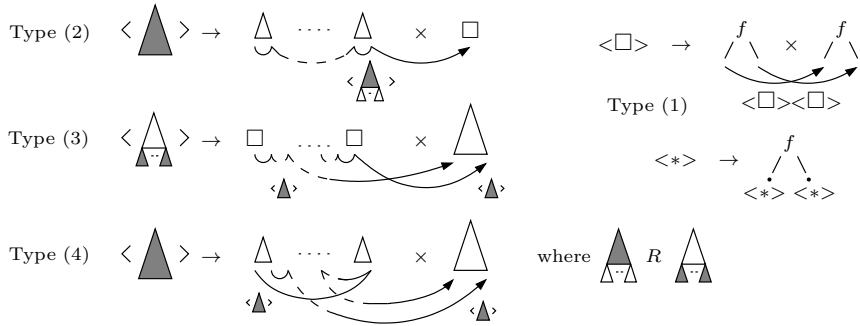}
  \end{center}
  {\caption{grammar associated to a top-down system.}
    \label{fig:grdesc}}
\end{figure}

\begin{example}
  \label{ex:desc}
  Consider the linear top-down system $R$ over the alphabet $F =
  \{f^{(2)},$ $g^{(1)}, {h}^{(1)}, a^{(0)}\}$ with a unique rule
  $fgxgy \rightarrow hfxy$. The corresponding grammar is the grammar
  of Ex. \ref{ex:gr_rel} where each non-terminal stands for one of the
  possible overlappings of rules of $R$: $A$ stands for $\ssq$ and $B$
  for $f\ssq_1\ssq_2$. Note that type (4) rules with non-maximal
  overlappings have been discarded. This example also illustrates the
  fact that the inverse image of a recognizable term language by the
  derivation of a linear top-down system is not recognizable in
  general: for instance, the image by $R_G^{-1}$ of ${h}^* f a a$ is
  $\{{h}^* f g^n a g^n a\ |\ n \geq 0\}$, which is not recognizable.
\end{example}

We will now mention a property of top-down systems, which has been
known for the past few years for larger classes of systems.

\begin{proposition}
  \label{prop:td-rec}
  The image of any recognizable term language by the derivation
  relation of a finite linear top-down term rewriting system is
  recognizable.
\end{proposition}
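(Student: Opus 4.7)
My plan is to exploit the very specific shape of the grammar $G$ built in the proof of Theorem~\ref{th:top-down}. The key invariant I will use is that, by construction, for every non-terminal $\nt{t} \in Q$ the second projection $\pid{\nt{t}}$ consists of exactly one variable. Inspecting the four types of productions shows that, once the first projection is erased, the productions on the second-projection variable are ordinary regular tree productions: a right-hand side of the form $f\,\nt{\ssq}^1_2 \ldots \nt{\ssq}^n_2$ for type~(1), a rename $\nt{t}_2 \to \nt{t[u]}_2$ for type~(2), and a terminal context applied to non-terminal second-projection variables for types (3) and (4). Consequently the second projection of $L(G, \nt{\ssq})$ is already a recognizable tree language, which equals $\ra{R}{*}(T(F))$.

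To obtain $\ra{R}{*}(L)$ for an arbitrary recognizable $L$, I would refine $G$ into a product grammar $G_A$ restricted to pairs $(s,t)$ with $s \in L$, using a top-down tree automaton $A = (Q_A, q_0, \delta)$ accepting $L$. Its non-terminals are pairs $(\nt{t}, \bar q)$, with $\bar q \in Q_A^n$ a tuple of states of $A$ of length equal to the arity of $\piu{\nt{t}}$, together with pairs $(\nt{*}, q)$ for $q \in Q_A$. Each production of $G$ spawns new productions of $G_A$ obtained by simulating $A$ top-down on the terminal part of the first-projection right-hand side: starting from the states in $\bar q$ at the roots, $A$ is propagated through each terminal symbol and its state at each first-projection variable becomes the corresponding coordinate of the state tuple attached to that non-terminal occurrence. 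For instance, a type~(1) production yields, for every rule $qf \to fq_1 \ldots q_n$ of $A$, a production $(\nt{\ssq}, q) \to f\,(\nt{\ssq}, q_1)_1 \ldots (\nt{\ssq}, q_n)_1 \times f\,(\nt{\ssq}, q_1)_2 \ldots (\nt{\ssq}, q_n)_2$. With axiom $(\nt{\ssq}, q_0)$, a routine induction on derivations establishes $L(G_A, (\nt{\ssq}, q_0)) = \{(s,t) \mid s \in L \text{ and } s \ra{R}{*} t\}$.

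Since this construction only refines the first projection, the single-variable second-projection invariant carries over to $G_A$. Erasing the first projection from every production of $G_A$ therefore yields an ordinary regular tree grammar whose language is exactly $\ra{R}{*}(L)$, which is hence recognizable. I expect the main technical point to lie in the handling of type~(2), (3) and (4) productions, whose first-projection right-hand sides carry a non-trivial tree of terminal and non-terminal nodes; because $A$ is top-down, however, simulating it through a fixed terminal tree with variable leaves produces only finitely many new productions per original one, so no blow-up arises. The unary non-terminal $\nt{*}$, used to discard arbitrary subterms on the first projection only, is handled by letting $(\nt{*}, q)$ generate precisely the trees accepted by $A$ from state $q$, so that it contributes nothing to the second projection.
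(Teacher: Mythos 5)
Your proposal is correct and follows essentially the same route as the paper's own (appendix) proof: it relies on the same invariant that every non-terminal's second projection is a single variable, builds the same kind of product grammar by running the top-down automaton $A$ over the terminal part of the first-projection right-hand sides and attaching the resulting state tuples to non-terminal occurrences, and then erases the first projection to obtain a regular tree grammar for $\ra{R}{*}(L)$. The only cosmetic difference is bookkeeping (you attach one state tuple to each non-terminal, the paper pairs a state with each first-projection variable and the state word with the second-projection variable), which does not change the argument.
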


Top-down systems form a strict subfamily of \emph{generalized
  semi-monadic term rewriting systems} \cite{gv98}, which is itself a
strict subfamily of \emph{right-linear finite path overlapping
  systems} \cite{tks00}. Both classes have been proven to preserve
recognizability. As a consequence, this is also the case for top-down
systems. However, it should be mentioned that neither of these classes
has a rational derivation. Indeed, it is quite easy to find a
generalized semi-monadic system whose derivation cannot be recognized
by any finite mechanism. For instance, the generalized semi-monadic
system whose unique rule is $gx \ra{}{} fgfx$ clearly has a
non-rational derivation: its intersection with the rational relation
$ga \times f^*gf^*a$ is $ga \times \{f^ngf^na | n \geq 0\}$. By the
usual pumping arguments (adapted to this new setting), this relation
is not rational.

Finally, please note that the inverse of a top-down system is, by
definition, bottom-up. For any top-down system we can build a
grammar $G$ recognizing $\ra{R}{*}$. Thus, the grammar
$\pi_2(G)\pi_1(G)$ obtained by swapping both projections of $G$
generates the derivation $\ra{R^{-1}}{*}$ of the bottom-up system
$R^{-1}$. Inverse recognizability preservation follows.

\begin{corollary}
  \label{cor:asc}
  Every finite linear bottom-up term rewriting system $R$ has a
  rational derivation $\ra{R}{*}$, and the inverse image by
  $\ra{R}{*}$ of any recognizable term language is recognizable.
\end{corollary}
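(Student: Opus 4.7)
The plan is to derive the corollary by a simple duality from Theorem \ref{th:top-down} and Proposition \ref{prop:td-rec}, exactly as suggested in the paragraph preceding the statement. The first observation is that, by the definitions given in Section \ref{sec:trs}, a system $R$ is bottom-up precisely when $R^{-1}$ is top-down, and linearity of a rule is preserved under inversion; hence if $R$ is a finite linear bottom-up system, then $R^{-1}$ is a finite linear top-down system. Applying Theorem \ref{th:top-down} to $R^{-1}$ yields a grammar $G$ whose generated language, viewed as a subset of $T(F) \times T(F)$, is the derivation relation $\ra{R^{-1}}{*}$.

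Next, I would argue that rationality of a binary relation of tuples of terms is preserved under swapping the two projections. Given $G$, form the grammar $G'$ obtained by replacing each production $(A,\piu{\alpha} \times \pid{\alpha})$ of $G$ by $(A', \pid{\alpha} \times \piu{\alpha})$, renaming non-terminals consistently so that the roles of $\pi_1$ and $\pi_2$ are swapped throughout. A straightforward induction on the length of a derivation shows that $(s,t) \in L(G',A')$ iff $(t,s) \in L(G,A)$; equivalently $L(G') = (\ra{R^{-1}}{*})^{-1} = \ra{R}{*}$, so $\ra{R}{*}$ is rational.

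For the recognizability statement, let $L$ be a recognizable term language. By Proposition \ref{prop:td-rec} applied to the finite linear top-down system $R^{-1}$, the direct image $\ra{R^{-1}}{*}(L) = \{t \mid \exists s \in L,\ s \ra{R^{-1}}{*} t\}$ is recognizable. But $s \ra{R^{-1}}{*} t$ iff $t \ra{R}{*} s$, so this set is exactly the inverse image $(\ra{R}{*})^{-1}(L) = \{t \mid \exists s \in L,\ t \ra{R}{*} s\}$, which is therefore recognizable.

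There is essentially no real obstacle to this argument, only a short verification: that Raoult's grammar formalism is closed under swapping the two projections of a binary relation on $T(F)$. This is immediate from the symmetric way $\mathit{Rat}_n$ is defined on components of tuples, so the corollary follows directly from the top-down results without any additional machinery.
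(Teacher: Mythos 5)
Your proposal is correct and follows essentially the same route as the paper: invert the bottom-up system to obtain a finite linear top-down system, apply Theorem \ref{th:top-down} and swap the two projections of the resulting grammar to generate $\ra{R}{*}$, and deduce the inverse-image recognizability from Proposition \ref{prop:td-rec} via the identity between direct images under $\ra{R^{-1}}{*}$ and inverse images under $\ra{R}{*}$. The only difference is that you spell out the (easy) verification that rational tuple languages are closed under swapping projections, which the paper takes for granted.
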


\section{Derivation of Suffix Systems}
\label{sec:suffix}

This section presents a study of the derivation relations of suffix
term rewriting systems. After introducing a property related to the
notion of \emph{suffix rewriting}, we show that the derivation of any
recognizable linear suffix system is rational. Finally, we prove that
the image or inverse image of any recognizable term language by the
derivation of a recognizable linear suffix system is recognizable, and
that it is possible to build a tree automaton accepting it.

\begin{definition}
  The suffix rewriting of a term rewriting system $R$ is the relation
  \begin{multline*}
    \esra{R}{}\ =\ \big\{\,(c[l\sigma],c[r\sigma]) \in T(F,X)^2\ |\ 
    (l,r) \in R\ \land\ c \in C_1(F)\\
    \land\ \sigma \in S(\emptyset,X) \mbox{ bijective} \big\}
  \end{multline*}
  (a bijective substitution in $S(\emptyset,X)$ is a bijective
  variable renaming over $X$).
\end{definition}

Suffix systems have a specific behaviour with respect to suffix
rewriting. Indeed, the derivation of any input tree $t$ by a suffix
system can always be decomposed in two phases. First, a prefix
$\bar{t}$ of $t$ is read, and several steps of suffix rewriting can be
applied to it. Once this first sequence is over, $\bar{t}$ has been
rewritten into a prefix $\bar{s}$ of $s$, never to be modified
anymore. In a second time, the rest of $t$ is derived in the same
fashion, starting with suffix rewriting of a prefix of the remaining
input. As a consequence, the derivation of a suffix system is
equivalent to its `iterated' suffix derivation.

\begin{lemma}
  \label{lem:suffix}
  For any suffix term rewriting system $R$,
  $$
  s\ \era{R}{+}\ t\ \iff\ \left\{
    \begin{gathered}
      \exists\ \bar{s},\bar{t} \in T(F,X),\ \sigma,\tau \in S(F,X)\ s
      = \bar{s}\sigma\ \land\ t = \bar{t}\tau
      \\
      \land\ \bar{s}\ \esra{R}{+}\ \bar{t}\ \land\ \forall x \in
      \mathit{Var}(\bar{s})\,\cap\,\mathit{Var}(\bar{t}),\ \sigma(x)\ 
      \era{R}{*}\ \tau(x).
    \end{gathered}
  \right.
  $$
\end{lemma}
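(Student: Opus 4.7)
The plan is to prove both directions separately. The $(\Leftarrow)$ direction is a direct lifting argument; the $(\Rightarrow)$ direction proceeds by induction on derivation length, with a case analysis in the inductive step where the suffix property plays its essential role.

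For $(\Leftarrow)$, I first note that any single suffix rewriting step lifts through an arbitrary substitution: if $\bar{u} = c[l\rho] \esra{R}{} c[r\rho] = \bar{u}'$, then $\bar{u}\sigma = c[l(\rho\sigma)] \era{R}{} c[r(\rho\sigma)] = \bar{u}'\sigma$, since $\rho\sigma$ is a valid substitution for the rule $(l,r)$. Iterating this gives $\bar{s}\sigma \era{R}{+} \bar{t}\sigma$. Then, for each shared variable $x$ and each position in $\mathit{pos}(x, \bar{t})$, I independently rewrite the subterm $\sigma(x)$ to $\tau(x)$ using $\sigma(x) \era{R}{*} \tau(x)$, producing $\bar{t}\sigma \era{R}{*} \bar{t}\tau$; any variables occurring in $\bar{t}$ but not in $\bar{s}$ (introduced by renamings along the suffix derivation) are handled by extending $\sigma$ to agree with $\tau$ on them, so that nothing needs to be derived there. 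Concatenation yields $s = \bar{s}\sigma \era{R}{+} \bar{t}\tau = t$.

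For $(\Rightarrow)$, I induct on the number $n$ of derivation steps. For $n = 1$, I write $s = c[l\sigma_0]$ and $t = c[r\sigma_0]$, pick a bijective renaming $\rho$ of $\mathit{Var}(l)$ into fresh variables of $X$, and set $\bar{s} = c[l\rho]$, $\bar{t} = c[r\rho]$, $\sigma = \tau$ with $\sigma(\rho(y)) = \sigma_0(y)$. Since $\mathit{Var}(r) \subseteq \mathit{Var}(l)$, every variable of $\bar{t}$ lies in $\bar{s}$, and $\sigma = \tau$ trivializes the shared-variable condition. For the inductive step, write $s \era{R}{} s' \era{R}{n-1} t$ with the first step at position $p$ using rule $(l, r)$ and substitution $\sigma_0$, apply the induction hypothesis to the tail to obtain $\bar{s}', \bar{t}, \sigma', \tau$, and analyze where $p$ sits in $\bar{s}'$.

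If $p$ is at or below a variable position $q$ of $\bar{s}'$, the first rewriting happens inside the subterm $\sigma'(x)$ attached at $q$, so I keep $\bar{s} = \bar{s}'$ and redefine $\sigma(x)$ to be the subterm of $s$ at $q$; then $\sigma(x) \era{R}{} \sigma'(x) \era{R}{*} \tau(x)$ by chaining. The hard case, and the main obstacle of the proof, is when $p$ lies at a non-variable (pattern) position of $\bar{s}'$: the first step must then be prepended to the top-level suffix derivation. I construct $\bar{s}$ from $\bar{s}'$ by replacing the pattern at $p$ with $l\rho$ for a fresh renaming $\rho$; applying rule $(l, r)$ at position $p$ gives a suffix rewriting step $\bar{s} \esra{R}{} \bar{s}''$ with $r\rho$ at position $p$, and it remains to argue $\bar{s}'' \esra{R}{*} \bar{t}$. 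This is precisely where the suffix property intervenes: any overlap between the freshly inserted $r$ and a left-hand side used subsequently in the suffix derivation must be nested (one subterm of the other), which is exactly the compatibility needed to integrate the prepended step into the remainder of the derivation without breaking its suffix structure. The most delicate bookkeeping is aligning the shape of $\bar{s}'$ at position $p$ with $r\rho$, which may require refining the IH's skeleton to unfold more structure there so that the two patterns coincide up to renaming of variables.
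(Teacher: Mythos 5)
Your $(\Leftarrow)$ direction is fine (and in fact more carefully argued than the paper's one-line dismissal), and the overall skeleton of $(\Rightarrow)$ — induction on the length with a case split on whether the peeled step's redex sits below a variable of the skeleton or inside its pattern part — matches the paper's. But you peel the \emph{first} step where the paper peels the \emph{last}, and this choice makes your hard case genuinely harder; more importantly, you have not actually proved it. The crux of your Case 2 is the claim that $\bar{s}''=\bar{s}'[p\leftarrow r\rho]$ satisfies $\bar{s}''\ \esra{R}{*}\ \bar{t}$, given only that $\bar{s}'\ \esra{R}{+}\ \bar{t}$. As stated this is generally false: $\bar{s}'|_p$ and $r\rho$ are two incomparable generalizations of $s'|_p=r\sigma_0$ (each may have structure where the other has a variable), so $\bar{s}''$ is neither $\bar{s}'$ nor an instance of it, and the IH's suffix derivation does not transfer. "Refining the IH's skeleton" by a substitution $\mu$ does not help either, because a suffix step requires a \emph{bijective variable renaming} at the redex; instantiating variables of $\bar{s}'$ by non-trivial terms turns suffix steps into ordinary rewriting steps, destroying exactly the property you are trying to establish. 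Conversely, generalizing $\bar{s}'$ at $p$ to $r\rho$ requires showing that no later left-hand side of the suffix derivation reaches strictly below a variable of the inserted $r$, and that the discarded structure (together with the old $\sigma'$ on the variables of $\bar{s}'$ below $p$) can be repackaged into the new $\sigma$ while re-establishing $\sigma(x)\ \era{R}{*}\ \tau(x)$ for the surviving variables $\rho(\mathit{Var}(r))$ — information the IH does not directly give you. What you label "delicate bookkeeping" is in fact the substantive content of the lemma, and the appeal to the overlap condition is an intuition, not an argument.

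The paper sidesteps this by processing the derivation from the other end: writing $s\ \era{R}{k}\ s'\ \rw{l,r}{p}\ t$ and applying the IH to the prefix $s\ \era{R}{k}\ s'$. When the final redex lies in the pattern part of $\bar{s'}$, either $\bar{s'}=c[l]$ exactly (extend the suffix derivation by one step), or $\bar{s'}=(c[l])\omega'$ for a non-trivial $\omega'$, in which case the whole history is collapsed to a \emph{single} suffix step $c[l]\ \esra{R}{}\ c[r]$ with $k'=1$, everything previously derived being pushed down into the substitutions via $\omega(x)\ \esra{R}{*}\ \omega'(x)$. This "restart with $k'=1$" move is what lets the suffix hypothesis do its work without ever having to splice a new step into an existing suffix derivation. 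If you want to keep your first-step induction, you would need to isolate and prove a separate commutation lemma to the effect that a suffix derivation out of an instance of $\bar{s}''$ factors through a suffix derivation out of $\bar{s}''$ itself — which is essentially the lemma you are proving.
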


Another interesting property is that, for any recognizable system, a
suffix rewriting sequence is always equivalent to a sequence in two
parts, where the first part only consumes suffix subterms of the input
term, and the second part only produces new suffix subterms in their
place.

\begin{lemma} 
  \label{lem:sat}
  For all recognizable linear term rewriting system $R$ over $F$ and
  $X$, there exist a finite ranked alphabet $Q$ and three finite
  rewriting systems
  \begin{itemize}
  \item $\rd\,\subseteq\,\{px \ra{}{} fp_1x_1 \ldots p_nx_n\ |\ f \in
    F,\ p,p_1, \ldots ,p_n \in Q,\ x,x_1, \ldots ,x_n \in X^*\}$
  \item $\rr\,\subseteq\,\{px \ra{}{} qy\ |\ p,q \in Q,\ x,y \in
    X^*\}$
  \item $\ru\,\subseteq\,\{fp_1x_1 \ldots p_nx_n \ra{}{} px\ |\ f \in
    F,\ p,p_1, \ldots ,p_n \in Q,\ x,x_1, \ldots ,x_n \in X^*\}$
  \end{itemize}
  such that $s\ \esra{R}{*}\ t\ \iff\ s\ \esra{\ru \cup \rr}{*}\ 
  \circ\ \esra{\rd \cup \rr}{*}\ t.$
\end{lemma}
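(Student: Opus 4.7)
The plan is to construct a ground-tree-transducer-like device: $\ru$ and $\rd$ will simulate a bottom-up automaton for the left-hand sides and a top-down generator for the right-hand sides of $R$, while $\rr$ glues them together and handles iteration. Since $R$ is recognizable, I decompose it as a finite union of pairs $U_i \to V_i$ of recognizable linear term languages. For each $i$, I build a finite bottom-up tree automaton $B_i$ recognizing $U_i$ and a finite top-down generator $T_i$ producing $V_i$; let $Q$ be the disjoint union of their state sets, augmented by auxiliary states used to wrap the leaves of the input term and to perform bijective variable renamings. State arities reflect the number of variables tracked, so that ``$p\,x$'' denotes a state $p$ above the variable word $x \in X^*$. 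The transitions of the $B_i$'s, read as rewrite rules, yield the $\ru$ component; those of the $T_i$'s yield $\rd$. The $\rr$ component contains, for each schema $U_i \to V_i$, a link $p\,x \to q\,y$ from each accepting state $p$ of $B_i$ with variable profile $x$ to each initial state $q$ of $T_i$ with variable profile $y$ compatible with some pair $(l,r) \in U_i \times V_i$, together with state-level bijective variable renamings.

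For the direction $\Rightarrow$, I simulate a suffix derivation $s = s_0\, \esra{R}{}\, \cdots\, \esra{R}{}\, s_n = t$ step by step. Each step $c_i[l_i\sigma_i]\, \esra{R}{}\, c_i[r_i\sigma_i]$ is realized as a three-stage micro-computation: collapse $l_i\sigma_i$ using $\ru$ rules into a state $p$ carrying the renamed variables of $l_i$; switch via an $\rr$ rule to a state $q$ for generating $r_i\sigma_i$; expand $q$ via $\rd$ rules into $r_i\sigma_i$. I then reorder the resulting micro-steps so that every $\ru$ and $\rr$ move precedes every $\rd$ move, producing the intermediate term $u$. For two $R$-steps acting at disjoint positions the corresponding micro-steps commute freely; for a step $j$ nested inside the output of an earlier step $i$, step $j$'s reading must be merged into a longer $\rr$-chain attached to step $i$'s state, which is why $\rr$ is designed to be closed under composition at the state level. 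The converse direction $\Leftarrow$ is more routine: given $s\, \esra{\ru \cup \rr}{*}\, u\, \esra{\rd \cup \rr}{*}\, t$, each state occurrence in $u$ witnesses a contiguous segment of suffix rewriting on the corresponding subterm position, and concatenating these segments yields $s\, \esra{R}{*}\, t$.

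The main obstacle is keeping $Q$ and all three rewriting systems finite while accommodating unbounded interaction between successive suffix rewrites and an unbounded variable alphabet. Recognizability of $R$ lets me describe left and right-hand sides by finite automata rather than by enumeration; linearity of $R$ guarantees that variables are neither duplicated nor identified by rewriting, so the variable annotations on states can be encoded by finitely many labels, essentially the variable profiles of subterms of rules of $R$ up to renaming. Together with the closure of $\rr$ under composition at the state level, these finite-presentation properties support the reordering argument needed to cleanly separate the reading and writing phases without losing information.
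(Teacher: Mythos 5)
Your overall architecture is the same as the paper's: describe the recognizable system by finitely many pairs of automata, turn the left-hand-side automata into collapsing rules ($\ru$), the right-hand-side generators into expanding rules ($\rd$), link reading to writing through a finite state-level relation $\rr$ whose states carry variable profiles, and then reorder an arbitrary suffix derivation into a read phase followed by a write phase. The gap is in your definition of $\rr$. You take it to be the accepting-to-initial links (plus bijective renamings) closed under \emph{composition at the state level}, i.e.\ the transitive closure of a relation on annotated states. That is not enough for the reordering step that you yourself flag as the main obstacle. When a later rule reads material that an earlier rule wrote, the micro-steps to be eliminated form a produce-then-consume round trip: an expansion $q\,v \ra{}{} f\, q_1 v_1 \ldots q_n v_n \ra{}{*} r\sigma$ followed by a collapse of (part of) $r\sigma$ back into states. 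Absorbing this into a single $\rr$-link requires a \emph{congruence} (saturation) rule: if $q\,v$ expands in one step to $f\, q_1 v_1 \ldots q_n v_n$, if $f\, p_1 u_1 \ldots p_n u_n$ collapses in one step to $p\,u$, and if $q_i v_i \mathrel{\rr} p_i u_i$ for every $i$, then $q\,v \mathrel{\rr} p\,u$. The pairs generated this way run from generator states to reader states and are created by recursion through the tree structure; they are not compositions of your base links, so transitive closure alone can never produce them (already for a two-step ground derivation $a \ra{}{} fb \ra{}{} c$ with rules $a \ra{}{} fb$ and $fb \ra{}{} c$ you need such a link at the leaf $b$ and again at the root). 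This saturation rule is the Dauchet--Tison step and is the technical heart of the paper's proof: the paper first establishes, by induction on nesting depth, that any excursion $p\,u \ra{}{*} t \ra{}{*} q\,v$ consisting of expansions followed by collapses implies $p\,u \mathrel{\rr} q\,v$, and only then runs the reordering induction, using that property in precisely the nested case.

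The remainder of your proposal is consistent with the paper and unproblematic: finiteness of $Q$ and of the variable annotations follows from recognizability and linearity as you say, and the converse inclusion reduces to checking that every $\rr$-link is realized by a suffix derivation of $R$. You should, however, also justify that the saturation of $\rr$ terminates; this is immediate once the rule is stated, since the set of candidate pairs of annotated states is finite.
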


Lemma \ref{lem:sat} can be reformulated in the following way: a pair
$(s,t)$ of terms belongs to the suffix derivation of a system $R$ if
and only if there is a context $c$ such that $s = c[s_1 \ldots s_n]$,
$t = c[t_1 \ldots t_n]$ and for all $i \in [1,n]$, there is a term
$q_ix_i$ such that $s_i \sra{\ru \cup \rr}{*} q_ix_i$ and $q_ix_i
\sra{\rd \cup \rr}{*} t_i$.

\begin{theorem}
  \label{th:suffix}
  Every recognizable linear suffix term rewriting system $R$ has a
  rational derivation.
\end{theorem}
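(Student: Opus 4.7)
The strategy is to assemble a grammar generating $\ra{R}{*}$ directly from the two structural lemmas just established. Lemma~\ref{lem:suffix} reduces the global derivation to an alternation between a suffix rewriting at the top and recursive derivations at the variables shared between $\bar s$ and $\bar t$, while Lemma~\ref{lem:sat} further refines each suffix rewriting into a two-phase process mediated by a finite auxiliary alphabet $Q$: an $\sra{\ru \cup \rr}{*}$-phase that consumes a prefix of $s$ into state-labelled leaves, followed by an $\sra{\rd \cup \rr}{*}$-phase that expands those states into a corresponding prefix of $t$. This is the familiar ground tree transducer pattern, and the plan is to encode it directly as an HR grammar.

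First, for each state $p \in Q$ I would introduce a non-terminal $\nt{p}$ of arity $2$, whose intended language is $\{(s,t)\,|\,\exists\, x,\ s \sra{\ru \cup \rr}{*} px \text{ and } px \sra{\rd \cup \rr}{*} t\}$. Its productions mimic the rules of the three subsystems: every $\ru$-rule $fp_1 x_1 \cdots p_n x_n \to px$ yields a production that expands the first projection of $\nt{p}$ by $f$ with the $\nt{p_i}$ at its children; every $\rd$-rule $px \to fp_1 x_1 \cdots p_n x_n$ yields a symmetric production expanding the second projection; every $\rr$-rule $px \to qy$ yields a unit production $\nt{p} \to \nt{q}$. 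Thanks to the fixed shape provided by Lemma~\ref{lem:sat}, the variable words $x, x_1, \ldots, x_n$ induce an unambiguous ordered correspondence between the variables of each rule and the non-terminal occurrences of the associated production; together with the linearity of $R$, this guarantees that the productions above are well defined as HR productions.

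To account for Lemma~\ref{lem:suffix}, I would add a fresh axiom $S$ of arity $2$ together with two families of productions. For every $f \in F_n$, an identity production $S \to fS_1 \cdots S_n \,\times\, fS_1 \cdots S_n$ propagates $S$ into matching subtrees of $s$ and $t$, capturing the case where the root position is not yet touched by any suffix rewriting. Then, for each variable shared between both projections in a previous production for some $\nt{p}$, a new production obtained by plugging an $S$-instance at that variable via the $\cdot_{x}$ product splices an arbitrary recursive derivation into the corresponding pair of subterms; iterating this at arbitrary depth is exactly what the operator $L^{*_x}$ from Definition~1 provides.

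Correctness then proceeds by a double induction. Given $s \ra{R}{*} t$, Lemma~\ref{lem:suffix} exhibits a top-level suffix-rewriting between prefixes $\bar s, \bar t$ together with recursive derivations at shared variables, and Lemma~\ref{lem:sat} refactors the top-level step into an $\sra{\ru \cup \rr}{*}$ phase followed by a $\sra{\rd \cup \rr}{*}$ phase, which is precisely traced by a grammar derivation through some $\nt{p}$, with the shared-variable subderivations handled recursively by nested $S$-instances; conversely, any grammar derivation of $(s,t)$ from $S$ unwinds, step by step, into the same decomposition. The main obstacle, and the point that requires the most care, is the variable bookkeeping: one must check that the ordered sequence of $\nt{p_i}$-hyperedges in each production faithfully tracks the variable word used in the corresponding rule of $\ru, \rr, \rd$, so that whatever subterm is consumed on the left is coherently produced on the right. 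Since $R$ is recognisable, Lemma~\ref{lem:sat} yields finite $Q, \ru, \rr, \rd$, so the resulting grammar is finite and $\ra{R}{*} \in \mathit{Rat}_2$.
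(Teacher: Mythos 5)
Your overall plan matches the paper's: use Lemma~\ref{lem:suffix} to reduce the derivation to iterated suffix rewriting spliced in at shared variables, use Lemma~\ref{lem:sat} to split each suffix phase into a consuming $\sra{\ru\cup\rr}{*}$ part and a producing $\sra{\rd\cup\rr}{*}$ part, and encode the result as an HR grammar in the style of a ground tree transducer with re-initialization of the axiom at shared variables. However, there is a genuine gap in the choice of non-terminals. You take one non-terminal $\nt{p}$ of arity $2$ per state $p\in Q$, with intended language $\{(s,t)\mid s\sra{\ru\cup\rr}{*}px\sra{\rd\cup\rr}{*}t\}$. This is exactly the GTT construction, and it only works when the states have no variables (the ground case). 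For a general suffix system the states of Lemma~\ref{lem:sat} carry variable words $x\in X^*$ of length possibly greater than one, and the whole difficulty of the theorem lies there. Concretely: after a consuming step $fp_1x_1\ldots p_nx_n\,\ru\,px$, the left projection has branched into $n$ subtrees headed by $p_1,\ldots,p_n$, while the right projection still has a single production site for $px$; the correspondence between the left leaves and the right leaves is given by the variables, which may be arbitrarily regrouped and permuted (e.g.\ the rule $fxy\to fyx$ of Example~\ref{ex:gtrs} swaps them, which is precisely why that system is not a GTT relation). An arity-$2$ hyperedge has one tentacle on each side and cannot expose the several leaf positions, on both projections, that share variables; so your step ``plug an $S$-instance at that shared variable'' is not realizable, because by the time the variable materializes its two occurrences sit in two \emph{different} hyperedges and no single production can replace them jointly. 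The same problem affects your unit production $\nt{p}\to\nt{q}$ for an $\rr$-rule $px\to qy$ when $x$ and $y$ differ.

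The paper's proof solves exactly this by taking as non-terminals the pairs $u|v$ of linear term words over $\mathit{Ran}(\ru\cup\rr)^*$ and $\mathit{Dom}(\rd\cup\rr)^*$ with $\mathit{Var}(u)=\mathit{Var}(v)$, minimal in the sense that they cannot be split into smaller variable-closed pairs; such a non-terminal has arity $|u|+|v|$ and simultaneously carries the whole frontier of the consumed part and of the part still to be produced, so that the productions for $\ru$, $\rd$, $\rr$ can reshuffle the state-terms on either side while keeping matching leaves in the same hyperedge, and the re-initialization rules $px|qx\to I$ and $px|\,\to I'$ can then be applied to a hyperedge that already links the two occurrences of a shared (or discarded) variable. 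Finiteness of this non-terminal set is itself a point that needs the linearity of $R$ and the finiteness of $F$, $X$ and $Q$, and it is not addressed by your construction. Without this enlarged non-terminal alphabet your grammar generates at most the GTT-recognizable approximation of $\ra{R}{*}$, which is strictly smaller in general.
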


\begin{proof}
  Let $R$ be a recognizable linear suffix system on $T(F,X)$. Let
  $\ru$, $\rr$ and $\rd$ be the rewriting systems mentioned in Lemma
  \ref{lem:sat}. Let $N$ be a set of pairs of the form $u|v$ where $u$
  and $v$ are two linear term words over $\mathit{Ran}(\ru \cup
  \rr)^*$ and $\mathit{Dom}(\rd \cup \rr)^*$ respectively. Note that
  $\mathit{Ran}$ and $\mathit{Dom}$ are defined up to a renaming of
  the variables. We can thus impose that $u$ and $v$ share the same
  set of variables ($\mathit{Var}(u) = \mathit{Var}(v)$), and there is
  no pair of strict subwords $u'$ and $v'$ of $u$ and $v$ such that
  $\mathit{Var}(u') \neq \mathit{Var}(v')$ (i.e one should not be able
  to split $u|v$ in two correct non-terminals). This, together with
  the facts that $F$ is finite and $u$ and $v$ are linear, implies
  that $N$ is finite for some fixed, standard variable renaming. Thus,
  given an axiom $I$, we can build a grammar $G$ whose set of
  non-terminals is $N\,\cup\,\{I,I'\}$, having the following finite
  sets of productions:
  \begin{multline}
    \label{eq:id}
    \forall\ f \in F,
    \\
    I\ \lra\ f I_1^1 \ldots I_1^n\ \times\ f I_2^1 \ldots I_2^n \quad
    \text{and} \quad I'\ \lra\ f I'{^1} \ldots I'{^n}
  \end{multline}
  \vspace{-5ex}
  \begin{multline}
    \label{eq:init}
    \forall\ px \in \mathit{Dom}(\rd \cup \rr) \cap \mathit{Ran}(\ru
    \cup \rr),
    \\
    I\ \lra\ px|px
  \end{multline}
  \vspace{-5ex}
  \begin{multline}
    \label{eq:eps}
    \forall\ u'\ \sra{\rr}{*}\ u,\quad v\ \sra{\rr}{*}\ v',\quad u'
    \in \mathit{Ran}(\ru)^*,\quad v' \in \mathit{Dom}(\rd)^*,
    \\
    u|v\ \lra\ u'|v'
  \end{multline}
  \vspace{-5ex}
  \begin{multline}
    \forall\ u_1 = p_1x_1 \ldots p_ix_i,\quad u_2 = p_{j+1}x_{j+1}
    \ldots p_nx_n,
    \\
    v = q_1y_1 \ldots q_my_m,\quad fp_{i+1}x_{i+1} \ldots
    p_jx_j\,\ru\,px,
    \\
    \label{eq:prodg}
    u_1\,px\,u_2|v\ \lra\ \mu_1 \ldots \mu_i\,(f\mu_{i+1} \ldots
    \mu_{j})\,\mu_{j+1} \ldots \mu_n\ \times\ \nu_1 \ldots \nu_m
  \end{multline}
  \vspace{-5ex}
  \begin{multline}
    \forall\ u = p_1x_1 \ldots p_nx_n,\quad v_1 = q_1y_1 \ldots
    q_iy_i,
    \\
    v_2 = q_{j+1}y_{j+1} \ldots q_my_m,\quad qy\,\rd\,fq_{i+1}y_{i+1}
    \ldots q_jy_j,
    \\
    \label{eq:prodd}
    u|v_1\,qy\,v_2\ \lra\ \mu_1 \ldots \mu_n\ \times\ \nu_1 \ldots
    \nu_i\,(f\nu_{i+1} \ldots \nu_{j})\,\nu_{j+1} \ldots \nu_m
  \end{multline}
  In rules (\ref{eq:prodg}) and (\ref{eq:prodd}), all the $(\mu_k)_{k
    \in [1,n]}$ and $(\nu_k)_{k \in [1,m]}$ are variables belonging to
  instances of non-terminals $u'|v' \in N$ where $u'$ and $v'$ are
  built from terms $(p_kx_k)_{k \in [1,n]}$ and $(q_ky_k)_{k \in
    [1,m]}$ respectively.  Variables $\mu_1$ to $\mu_n$ (resp. $\nu_1$
  to $\nu_m$) appear only in the first (resp. second) projection of
  any non-terminal. Note that this instantiation is unique, by
  construction of the set $N$. It is also always possible since every
  rule of $R$ is, by hypothesis, linear.
  
  Call $\rho$ the substitution which maps each non-terminal variable
  $(u|v)_i$ to the term $(u)_i$ if $i \in [1,|u|]$ and to $(v)_i$ if
  $i \in [|u|+1,|u|+|v|]$, and each non-terminal variable $(I^i_j)_{j
    \in [1,2]}$ to a variable $x_i$. It is clear from the rules of
  $G_0$ that:
  \begin{equation}
    \label{eq:prefpart2}
    I\ \era{G_0}{*}\ s\,\times\,t\ \iff\ s\rho\ \esra{\ru 
      \cup \rr}{*}\ \circ\ \esra{\rd 
      \cup \rr}{*}\ t\rho.
  \end{equation}
  We will not detail the proof of this observation. Notice that this
  grammar works in a very similar way to a \emph{ground tree
    transducer}, which is the formalism used by \cite{dt85} to
  recognize the derivation of a ground system. The only difference is
  that we keep track of the variables appearing in the left and right
  projections of the relation, so as to be able to resume the
  rewriting at relevant positions. Now add to $G_0$ the set of rules
  \begin{align}
    \forall\ x \in X \mbox{ such that } x R px,\ qx R x,
    \label{eq:reinit}
    & & px|qx\ &\lra\ I
    \\
    \label{eq:reinitbis}
    & & px|\ \quad &\lra\ I'
  \end{align}
  and call this new grammar $G$. These last rules allow the derivation
  to go on properly after a first sequence of suffix rewritings has
  taken place, by creating new instances of the axiom between leaves
  where the same variable would appear. By Lemma \ref{lem:suffix}, $G$
  generates $\ra{R}{*}$. \qed
\end{proof}

\begin{proposition}
  \label{prop:suffrec}
  The image and inverse image of any recognizable term language by the
  derivation of a finite linear suffix term rewriting system is
  recognizable.
\end{proposition}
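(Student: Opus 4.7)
The plan is to exploit the ground-tree-transducer-like (GTT) structure of the grammar built in the proof of Theorem \ref{th:suffix}, combined with the decomposition of derivations given by Lemmas \ref{lem:suffix} and \ref{lem:sat}, to construct an explicit finite tree automaton accepting $\era{R}{*}(L)$ whenever $L$ is a recognizable term language.

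First I would handle a single ``pass'' of suffix rewriting, namely the relation $\esra{R}{*}$. The rules of $\ru$, $\rr$ and $\rd$ produced by Lemma \ref{lem:sat} precisely match the shape of transitions of bottom-up and top-down tree automata, with $\rr$ playing the role of $\varepsilon$-transitions. Hence for each cut label $(q,x) \in Q \times X^*$ the languages $B_{qx} = \{\, s \mid s\ \esra{\ru\cup\rr}{*}\ qx\,\}$ and $T_{qx} = \{\, t \mid qx\ \esra{\rd\cup\rr}{*}\ t\,\}$ are recognizable, with tree automata that can be built explicitly. Given a tree automaton $A_L$ for $L$, I construct a product automaton simulating $A_L$ in parallel with the bottom-up automaton of $\ru\cup\rr$, which nondeterministically ``commits'' at certain subtree positions, recording the $(q,x)$-label reached there; above the commit positions only $A_L$ is simulated. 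The image of $L$ under $\esra{R}{*}$ is then obtained by substituting, at each cut, an arbitrary element of $T_{qx}$. Since recognizable tree languages are closed under context substitution of recognizable sets, this yields a recognizable language. This is the standard proof that GTT-style transductions preserve recognizability.

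To lift this from $\esra{R}{*}$ to $\era{R}{*}$ I would invoke Lemma \ref{lem:suffix}: a derivation $s\ \era{R}{+}\ t$ decomposes as a top suffix step $\bar s\ \esra{R}{+}\ \bar t$ together with recursive derivations $\sigma(x)\ \era{R}{*}\ \tau(x)$ at the shared variable positions of $\bar s$ and $\bar t$, which are exactly the cut leaves of the previous construction. I would therefore express the image as the least fixpoint of the operator $\Phi(M) = L \cup \esra{R}{*}(M)$, where the subtrees plugged in at cut positions during $\esra{R}{*}$ are drawn from $M$ rather than from $T_{qx}$ alone. Because only a finite cut label $(q,x)$ needs to be transmitted across each cut, this fixpoint can be captured by a single bottom-up tree automaton whose states are pairs of an $A_L$-state and a cut label, and whose transitions encode both the $A_L$ step and the $\ru\cup\rr$ step, obtained by a standard saturation procedure.

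The main obstacle is the argument in Step 2: although cuts may be nested to arbitrary depth through successive applications of Lemma \ref{lem:suffix}, one must show that the finitely many labels $(q,x)$ suffice to summarise all the information needed across iterations, so that the saturation construction remains within a fixed finite state space. The inverse image follows by symmetry, as $R^{-1}$ is again a finite linear suffix system (the suffix overlap condition is symmetric in $l$ and $r$), so applying the forward direction to $R^{-1}$ yields the inverse-image statement for $R$.
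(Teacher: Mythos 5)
Your overall strategy is essentially the one the paper follows: synchronize the GTT-like decomposition of Lemma \ref{lem:sat} with the input automaton, using product states that pair an automaton state with a cut label $(q,x)$, and handle the iteration across shared-variable positions via Lemma \ref{lem:suffix}. Two points, however, deserve attention.

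First, the ``main obstacle'' you leave open --- that the finitely many labels $(q,x)$ suffice across arbitrarily nested applications of Lemma \ref{lem:suffix} --- is precisely what the proof of Theorem \ref{th:suffix} has already settled: the grammar $G$ built there has a finite non-terminal set $N \cup \{I,I'\}$ closed under the restart rules $px|qx\ \lra\ I$ and $px|\ \lra\ I'$, which re-create an instance of the axiom exactly at the shared-variable cuts, so no further saturation argument is needed. The paper's proof simply takes $G$ as given and forms a product grammar whose non-terminals adjoin a state of the input automaton to each cut label (and replace $I$, $I'$ by states $r$, $r'$); forgetting the left projection then yields a regular tree grammar for the image. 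You should either reuse $G$ in the same way or actually carry out the closure argument you flag; as written, your Step 2 is a plan rather than a proof.

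Second, your reduction of the inverse image to the forward case via $R^{-1}$ does not typecheck in the paper's setting: a rewrite rule must satisfy $\mathit{Var}(r) \subseteq \mathit{Var}(l)$, so $R^{-1}$ is a term rewriting system only when every rule of $R$ satisfies $\mathit{Var}(l) = \mathit{Var}(r)$ (for instance $fxy \rightarrow gx$ inverts to $gx \rightarrow fxy$, which is not a legal rule). The suffix overlap condition is indeed symmetric, but it is not the only requirement. The paper sidesteps this by exploiting the symmetry of the constructed grammar itself: the synchronization with the input automaton can be performed on the second projection of $G$ instead of the first, which yields the inverse image without ever forming $R^{-1}$. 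Your construction admits the same fix.
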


\paragraph{Proof sketch.}
Once a grammar generating the derivation of a suffix system $R$ is
built, according to the previous proof, it is not difficult to
synchronize the left projection of this grammar with any finite
top-down tree automaton $A$. We thus obtain a new grammar, whose
second projection yields a finite automaton accepting the image of
$L(A)$ by $\ra{R}{*}$. This is symmetrical, hence the converse. \qed

We will illustrate the fact that suffix systems are strictly more
general than ground systems on the following simple example.

\begin{example}
  \label{ex:gtrs}
  Consider the finite suffix system $R = \{fxy\,\rightarrow\,fyx,\ a\,
  \rightarrow\,ga\}$ over the ranked alphabet $\{f^{(2)},\ g^{(1)},\ 
  a^{(0)}\}$. The first rule of $R$ allows to swap at any time both
  children of an $f$-node. This somehow expresses the commutativity of
  $f$. The derivation of $R$ (restricted for the sake of clarity to
  $(fg^*ag^*a)^2$) is the relation $\{(fg^mag^na, fg^nag^ma)\,|\,m,n
  \geq 0\} \cup \{(fg^mag^na, fg^{m+1}ag^na)\,|\,m,n \geq 0\} \cup
  \{(fg^mag^na, fg^mag^{n+1}a)\,|\,m,n \geq 0\}, $ which is not
  recognizable by a ground tree transducer.
  
  Furthermore, we claim that the transition graph of this rewriting
  system is not isomorphic to the transition system of any
  (recognizable) ground term rewriting system as defined in
  \cite{lo02,co02}. \emph{Note:} the transition graph of a rewriting
  system is the graph whose vertices are the terms from the domain or
  range of the system, and whose edges are all the pairs $(s,t)$ such
  that $s$ can be rewritten to $t$ in one step.
\end{example}

\section{Conclusion}

This paper extends the left, right, prefix and suffix word rewriting
systems defined in \cite{ca00} to bottom-up, top-down, suffix and
prefix term rewriting systems. The derivation relation of the three
first types of systems can be generated by finite graph grammars,
while systems of the fourth type have a non recursive derivation in
general. We also stated some recognizability preservation properties
of these classes of systems, and provided effective constructions in
each case. Although \cite{tks00} defines a class of
recognizability-preserving rewriting systems strictly more general
than top-down systems, they do not aim to provide a construction for
the derivation relation itself, which is indeed not rational. As for
suffix systems, to our knowledge, no comparable class of
recognizability-preserving systems has been defined yet.

This study puts in practical use the notion of rationality defined in
\cite{ra97}, which nicely extends the usual rational relations on
words, even though some of their key properties are missing, like the
closure by composition or systematic preservation of recognizability.
However, this formalism is an interesting and powerful work basis for
the study of binary relations on terms, especially thanks to the fact
that it is general enough to extend asynchronous transducers (which is
not the case of most other formalisms). Still, depending on one's
objectives, it might be necessary to devise a more restricted
notion of rational relations on terms, which would be closed under
composition or preserve recognizability (or both). Note that
\cite{ra97} contains the definition of such a subfamily of relations
(called rational \emph{transductions}). However, it can be shown that
the derivations of some top-down systems do not belong to this class.

Finally, it could be interesting to look for extensions to some of the
existing works previously mentioned. First, one may try to elaborate
actual verification methods using our systems to model transitions,
and recognizable term languages for sets of configurations, along the
ideas of regular model-checking \cite{bjnt00}. Indeed, being able to
effectively build the transitive closure of the system's transition
relation and compute the image of regular sets of configurations could
lead to interesting results.  Second, the definitions from \cite{lo02}
and \cite{co02} about transition graphs of ground systems, should
extend smoothly to the case of suffix systems. Thus, it would be
meaningful to determine whether part or all of their results extend to
this new family, and in particular whether the transition graphs of
suffix systems have a decidable first order theory with reachability.
Note that, as illustrated in Ex. \ref{ex:gtrs}, we suspect that the
transition graphs of suffix systems \emph{strictly} include the former
families of graphs.

\pagebreak

\appendix
\section{Appendix: Proof details}

We will present here the missing proof to Lemma \ref{lem:top-down}, a
construction which can be used as an alternative proof for Proposition
\ref{prop:td-rec}, the proofs to Lemmas \ref{lem:suffix} and
\ref{lem:sat} and a detailed construction for Proposition
\ref{prop:suffrec}.

\subsection{Derivation of Bottom-up and Top-down Systems}

\noindent
{\bf Lemma \ref{lem:top-down}.} {\itshape The relations of derivation
  and top-down derivation of any top-down term rewriting system $R$
  coincide:
  $$
  \era{R}{*}\ =\ \ehra{R}{*}.
  $$}

\begin{proof}
  By definition, $\hra{R}{*}\ \subseteq\ \ra{R}{*}$. It remains to
  prove the converse, by using the same technique as in \cite{ca00}.
  As $\ra{R}{*}\ =\ \ra{R - (X \times X)}{*}$, we may assume that $R\ 
  \cap\ (X \times X) = \emptyset$. Then, we can sort any derivation
  into a \emph{top-down} derivation, as defined above, by applying an
  alternative version of the bubble sort algorithm in which removal
  and addition of elements are allowed. To do so, we use the following
  inclusions:
  \begin{alignat}{5}
    \rw{u,v}{\geqslant p} & \ \circ\ \rw{x,v'}{p} & & \ \subseteq\ &
    \rw{x,v'}{p} & \ \circ\ \rw{u,v}{2^{>p}} & & & &
    \label{inc:1} \\
    \rw{u,v}{>p} & \ \circ\ \rw{u',v'}{p} & & \ \subseteq\ &
    \rw{u',v'}{p} & \ \circ\ \rw{u,v}{2^{\geqslant p}} & & & &
    \mbox{with}\ u,u' \notin X
    \label{inc:2} \\
    \rw{u,v}{>p} & \ \circ\ \rw{x,v'}{p} & & \ \subseteq\ &
    \rw{x,v'}{p} & \ \circ\ \rw{u,v}{2^{>p}} & & & & \mbox{with}\ 
    u',v' \notin X
    \label{inc:3} \\
    \rw{u,v}{>p} & \ \circ\ \rw{x,v'}{p} & & \ \subseteq\ &
    \rw{x,v'}{p} & \ \circ\ \rw{u,v}{2^{>p}} & & \quad \cup \quad & &
    \rw{u,v}{p} \ \circ\ \rw{x,v'}{2^{>p}}
    \label{inc:4} 
  \end{alignat}
  where $\ra{>p}{}$ (resp. $\ra{\geqslant p}{}$) denotes rewriting at
  any position greater than (resp. greater or equal to) $p$, and
  $\ra{2^{>p}}{}$ or $\ra{2^{\geqslant p}}{}$ denotes multi-step
  rewriting at any set of such positions. Let us now prove the first
  inclusion. Let
  $$
  r \rw{u,v}p s \rw{x,v'}q t
  $$
  with $p \geqslant q$. One can find 1-contexts $\bar{r}$ and
  $\bar{s}$ and substitutions $\rho$ and $\sigma$ such that $r\ =\ 
  \bar{r} [u \rho],\ s = \bar{r} [v \rho] = \bar{s} [x \sigma]$ and $t
  = \bar{s} [v' \sigma]$, with $\mathit{pos}(\ssq,\bar{r}) = {p}$ and
  $\mathit{pos}(\ssq,\bar{s}) = {q}$. As $p \geqslant q$ and by
  hypothesis on $R$, there is a substitution $\gamma$ such that
  $\bar{r} = \bar{s} [x \gamma]$ and $\sigma = \gamma[v\rho]$. It is
  thus possible to swap the rewriting steps between $r$ and $t$:
  $$
  r = (\bar{s} [x \gamma]) [u \rho] \rw{x,v'}q (\bar{s} [v'
  \gamma]) [u \rho] \rw{u,v}{P} (\bar{s} [v' \gamma]) [v \rho] = t,
  $$
  where $P\ \subseteq\ 2^{>q}$ is the set of positions at which the
  special variable $\ssq$ occurs in $\bar{s} [v' \gamma]$ ($P =
  \mathit{pos}(\ssq,\bar{s} [v' \gamma])$). Hence
  $$
  r \rw{x,v'}q \circ \rw{u,v}{2^{>q}} t,
  $$
  and inclusion (\ref{inc:1}) is proven. More generally, consider
  any two-steps rewriting
  $$
  r \rw{u,v}p s \rw{u',v'}q t
  $$
  with $p > q$. By definition there must exist 1-contexts $\bar{r}$
  and $\bar{s}$ and substitutions $\rho$ and $\sigma$ such that $r =
  \bar{r} [u \rho],\ s = \bar{r} [v \rho] = \bar{s} [u´ \sigma]$ and
  $t = \bar{s} [v' \sigma]$, with $\mathit{pos}(\ssq,\bar{r}) = \{p\}$
  and $\mathit{pos}(\ssq,\bar{s}) = \{q\}$. The hypotheses we made on
  $R$ imply certain restrictions on the structure of configuration
  $s$.  As $R$ is descending, $\bar{s} [u'] \leqslant \bar{r}$. So
  there must be a substitution $\gamma$ such that $\bar{r} = \bar{s}
  [u' \gamma]$ and $\sigma = \gamma[v\rho]$. Hence
  $$
  r = (\bar{s} [u' \gamma]) [u \rho] \rw{u',v'}q (\bar{s} [v'
  \gamma]) [u \rho] \rw{u,v}{P} (\bar{s} [v' \gamma]) [v \rho] = t,
  $$
  where $P = \mathit{pos}(\ssq,\bar{s} [v' \gamma])$. Thus
  $$
  r \rw{x,v'}q \circ \rw{u,v}{P} t.
  $$
  As $P\ \in\ 2^{\geqslant q}$, inclusion (\ref{inc:2}) is proven.
  If $v' \notin X$ or $v' = y \in X \land \gamma(y)$ is not trivial,
  then $P \in 2^{>q}$, so (\ref{inc:3}) and the first part of
  (\ref{inc:4}) are true.  Finally, if $u \in X,\ v' \in X$ and
  $\gamma(v')$ is trivial, then $r = \bar{r} \rho,\ t = \bar{s}
  \sigma,\ \bar{r} = \bar{s} [u'],\ v \rho = \sigma$, hence
  $$
  r = (\bar{s} [u']) \rho \rw{x,v}q (\bar{s} [v [u']]) \rho
  \rw{u',x'}{P \ \subseteq\ 2^{> q}} (\bar{s} [v]) \rho = t. \qed
  $$
\end{proof}

\noindent
{\bf Proposition \ref{prop:td-rec}.} {\itshape The image of any
  recognizable term language by the derivation of a linear top-down
  term rewriting system is recognizable.}

\begin{proof}
  First note that the domain of the derivation of a descending system
  is $T(F)$, and its range is recognizable: the grammar constructed in
  the proof of Theorem \ref{th:top-down} only has non-terminals whose
  second projection is reduced to a single variable.
  
  Let $L$ be a recognizable term language accepted by some top-down
  tree automaton $A$ with a set of control states $Q$. Let $R$ be a
  linear descending system. By the previous construction, we are able to
  build a rational grammar $G$ recognizing $\ra{R}{*}$. Let us now
  define a ``product grammar'' $G_A$ whose domain is $L$ and whose
  range is $\ra{R}{*}(L)$.  $G_A$'s non-terminals will be of the form
  \begin{equation}
    (N_1,q_1)\ldots(N_n,q_n)(N_{n+1},q_1 \ldots q_n), 
    \label{eq:nonterm}
  \end{equation}
  noted $N_{q_1 \ldots q_n}$ for short. For each production
  $$
  N_1 \ldots N_{n+1}\ \era{}{}\ t_1 \ldots t_n \ s
  $$
  of $G$, the product grammar $G_A$ will have all possible
  productions
  $$
  N_{q_1 \ldots q_n}\ \era{}{}\ t'_1 \ldots t'_n \ s'
  $$
  where the $m$-contextual term word $t_1 \ldots t_n$ is partially
  accepted by $A$ (see Sect. \ref{prelim}) with initial control word
  $(q_1 \ldots q_n)$ and final control word $(q'_1 \ldots q'_m)$.
  Furthermore, $t'_1 \ldots t'_n$ is obtained from $t_1 \ldots t_n$ by
  pairing each of its $m$ variables with the associated component of
  the final control word, and $s'$ is obtained from $s$ by pairing
  each of its variables with a word on $Q^*$, so as to complete every
  instance of $G_A$'s non-terminals according to (\ref{eq:nonterm}).
  The result of this is that a pair $(s,t)$ belongs to $L(G_A)$ if and
  only if $(s,t) \in L(G)$ and $s \in L$.  In other words, the second
  projection of $G_A$ is exactly the set of terms who are the image of
  some term in $L$ by $\ra{R}{*}$, so $\pi_2(L(G_A)) =\ \ra{R}{*}(L)$.
  By forgetting the left projection of every grammar production, one
  gets a grammar where all non-terminals have an arity of 1. Such
  grammars are called \emph{regular tree grammars} and generate
  recognizable term languages.  In this case, we obtain a regular
  grammar generating $\ra{R}{*}(L)$. \qed
\end{proof}

\subsection{Derivation of Suffix Systems}

\noindent
{\bf Lemma \ref{lem:suffix}} {\itshape For any suffix term rewriting
  system $R$,
  $$
  s\ \era{R}{k}\ t,\ k>0\ \iff\ \left\{
    \begin{gathered}
      \exists\ \bar{s},\bar{t} \in T(F,X),\ \sigma,\tau \in S(F,X),\ 
      k'>0,
      \\
      s = \bar{s}\sigma\ \land\ t = \bar{t}\tau \land\ \bar{s}\ 
      \esra{R}{k'}\ \bar{t}
      \\
      \land\ \forall x \in \mathit{Var}(\bar{s})\, \cap\,
      \mathit{Var}(\bar{t})\ \sigma(x)\ \era{R}{\leq k-k'}\ \tau(x).
    \end{gathered}
  \right.
  $$}

\begin{proof}
  First note that, since $\sra{R}{}\ \subseteq\ \ra{R}{}$, the inverse
  implication is trivial. It only remains to prove the direct
  implication. We will reason by induction on $k$:
  \begin{list}{}{}
  \item[$k = 1$: ] $s \ra{R}{} t$ implies that there is a context $c$
    and a substitution $\sigma$ such that $s = c[l\sigma]$ and $t =
    c[r\sigma]$.  Thus by definition of suffix rewriting $c[l]\ 
    \sra{R}\ c[r]$, and of course for all variable $x$, $\sigma(x)
    \ra{R}{0} \sigma(x)$.
  \item[$k \Rightarrow k+1$: ] let $s \ra{R}{k} s'\ \rw{l,r}{p} t$
    with $lRr$, $s = \bar{s}\sigma$ and $s' = \bar{s'}\sigma'$. Two
    cases:
    \begin{list}{}{}
    \item[$p \in \mathit{Pos}(\bar{s'})$: ] if there exists a context
      $c$ such that $\bar{s'} = c[l]$, then we have $\bar{s}\ 
      \sra{R}{k}\ \bar{s'}\ \sra{R}{}\ c[r]$ and the condition is
      verified with $k' = k$ and $t = (c[r])\sigma'$. If not, then
      there must exist a context $c$ and a non-trivial substitution
      $\omega'$ such that $\bar{s'} = (c[l])\omega'$. As $R$ is suffix
      and as, by induction hypothesis, $\bar{s} \sra{R}{*} \bar{s'}$,
      there must exist $\omega$ such that $\bar{s} = (c[l])\omega$ and
      for all variable $x$ common to $l$ and $r$, $\omega(x)
      \sra{R}{*} \omega'(x)$.  We can then write $s$ as
      $(c[l])\omega\sigma$, $t$ as $(c[r])\omega'\sigma'$, and verify
      the condition is true with $k' = 1$.
    \item[$p \notin \mathit{Pos}(\bar{s'})$: ] by induction
      hypothesis, one can find $k'>0$ such that for all $x$ common to
      $\bar{s}$ and $\bar{s'}$, $\sigma(x) \ra{R}{\leq k-k'}
      \sigma'(x)$.  Furthermore, by applying rule $(l,r)$ to one of
      the $\sigma'(x)$, we get $\sigma'(x) \ra{l,r}{} \tau(x)$. We
      thus have $t = \bar{s'}\tau$, $\bar{s} \sra{R}{k'} \bar{s'}$ and
      $\sigma(x) \ra{R}{k-k'+1} \tau(x)$ for all $x$ in both $\bar{s}$
      and $\bar{s'}$, which verifies the condition and concludes the
      proof. \qed
    \end{list}
  \end{list}
\end{proof}

\noindent
{\bf Lemma \ref{lem:sat}.} {\itshape For all recognizable linear term
  rewriting system $R$ over $F$ and $X$, there exist a finite ranked
  alphabet $Q$ and three finite rewriting systems
  \begin{itemize}
  \item $\ru\ \subseteq\ \{px \ra{}{} fp_1x_1 \ldots p_nx_n\ 
    |\ f \in F,\ p,p_1, \ldots ,p_n \in Q,\ x,x_1, \ldots ,x_n \in
    X^*\}$,
  \item $\rr\ \subseteq\ \{px \ra{}{} qy\ |\ p,q \in Q,\ 
    x,y \in X^*\}$,
  \item $\rd\ \subseteq\ \{fp_1x_1 \ldots p_nx_n \ra{}{} px\ |\ f \in
    F,\ p,p_1, \ldots ,p_n \in Q,\ x,x_1, \ldots ,x_n \in X^*\}$
  \end{itemize}
  such that
  $$
  s\ \esra{R}{*}\ t\ \iff\ s\ \esra{\ru \cup
    \rr}{*}\ \circ\ \esra{\rd \cup
    \rr}{*}\ t.
  $$}

\begin{proof}
  Let $R$ be a recognizable linear rewriting system over $F$ and $X$.
  Let $(A_i,B_i)_{i \in [1,l]}$ be $l$ pairs of finite top-down
  automata, each accepting a set of rules of $R$. The set of states of
  $A_i$ (resp. $B_i$) will be referred to as $P_i$ (resp.  $Q_i$).
  Without losing generality, we will suppose that all $P_i$ and $Q_i$
  are pairwise disjoint. We also define $P$ as the union of all $P_i$
  and $Q$ as the union of all $Q_i$. For all state $p \in P \cup Q$,
  define $\nu(p)$ as the set of all possible variable boundaries in
  the language accepted with initial state $p$.  The way we defined
  recognizable linear systems, i.e. with a finite number of variables,
  we can consider without losing generality that $|\nu(p)| = 1$ for
  all $p$.
  
  In a first step, we define a new rewriting system $R'$ on $F \cup P
  \cup Q$ and $X$. The state alphabets $P$ and $Q$ are considered as
  ranked alphabets, where the arity of any of their symbols is equal
  to the number of variables appearing in the terms of its associated
  term language (which can be supposed unique without losing
  generality). For instance, suppose that from some state $p$,
  automaton $A$ accepts the language $g^*fxy$. Then, $p$ will be
  considered as a binary symbol.

  We give $R'$ the following set of rules. For all rule $pf
  \rightarrow p_1 \ldots p_m$ in some $A_i$ such that $\nu(p) =
  \nu(p_1) \ldots \nu(p_m) = x_1 \ldots x_n$, we have:
  \[  
  f p_1 \nu(p_1) \ldots p_m \nu(p_m)\ \rightarrow\ p x_1 \ldots x_n\ 
  \in\ R'.
  \]
  Rules of this kind allow us to consume a left-hand side of a rule in
  the input tree. For all rule $qf \rightarrow q_1 \ldots q_m$ in
  some $B_i$ such that $\nu(q) = \nu(q_1) \ldots \nu(q_m) = x_1 \ldots
  x_n$, we have:
  \[  
  q x_1 \ldots x_n\ \rightarrow\ f q_1 \nu(q_1) \ldots q_m \nu(q_m)\ 
  \in\ R'.
  \]
  Rules of this kind allow us to produce a right-hand side of a rule
  whose left-hand side has been previously consumed. Finally, for all
  pair $(p_0,q_0)$ of initial states of some pair $(A_i,B_i)$, with
  $\nu(p_0) = x_1 \ldots x_n$ and $\nu(q_0) = x_{k_1} \ldots x_{k_m}$
  we have:
  \[
  p_0 x_1 \ldots x_n\ \rightarrow\ q_0 x_{k_1} \ldots x_{k_m}\ \in\ 
  R'.
  \]
  This simulates the application of a rewrite rule from $L(A_i) \times
  L(B_i)$ by initiating a run of automaton $B_i$ when a successful
  `reverse run' of $A_i$ has been achieved. When restricted to
  $T(F)^2$, the derivation of $R'$ coincides with $\ra{R}{*}$:
  \begin{equation}
    \label{eq:trs_equiv2}
    \forall s,t \in T(F),\ s\ \era{R}{*}\ t\ \iff\ s\ \era{R'}{*}\ t.
  \end{equation}
  The proof of this property is not difficult and will thus not be
  detailed here. In the rest of the proof, $p,q$ and all variations
  thereof designate automata control states in $P \cup Q$, variable
  words in $X^*$ are denoted by $u,v,u_i,v_i, \ldots$, and $\sigma$ is
  a variable renaming.
  
  In a second step, we define $\rd$ as $\{fp_1u_1 \ldots
  p_nu_n\,R'\,pu\}$ (`consuming' rules), $\ru$ as $\{qv\,R'\,fq_1v_1
  \ldots q_nv_n\}$ (`producing' rules) and $\rr$ as the smallest
  binary relation in $T(F,X)^2$ closed by the following inference
  rules:
  \begin{gather*}
    \frac{}{pu\ \rr\ pu} \quad(1) \qquad \frac{pu\ R'\ 
      qv}{pu\ \rr\ qv} \quad(2)
    \\ \\
    \frac{pu\ \rr\ qv}{pu\sigma\ \rr\ 
      qv\sigma} \quad (3) \qquad \frac{qu\ \rr\ q'v \quad
      q'v\ \rr\ q''z}{qu\ \rr\ q''z} \quad(4)
    \\ \\
    \frac{pu\ \rd\ fp_1u_1 \ldots p_nu_n \quad fq_1v_1
      \ldots q_nv_n\ \ru\ qv \quad \forall i,\,p_iu_i\ 
      \rr\ q_iv_i}{pu\ \rr\ qv} \quad (5)
  \end{gather*}
  Since $F,P,Q$ and $X$ are finite, and each symbol $p$ of $P \cup Q$
  has a definite arity, then $\rr$ is finite and effectively
  computable. Let us mention a simple property of $\rr$:
  \begin{equation}
    \label{eq:pont}
    \forall\,pu,t,qv \in T(F,X),\ pu\ \esra{\rd \cup
      \rr}{*}\ t\ \esra{\ru \cup 
      \rr}{*}\ qv\ \Lra\ pu\,\rr\,qv.
  \end{equation}
  This can be proved by induction on the nesting depth $k$ of term
  $t$:
  \begin{list}{}{}
  \item[$k = 0$: ] no rule of $\rd$ or $\ru$ is
    applied, thus $pu \sra{\rr}{*} qv$. Then, by inference
    rule (4), the property is true.
  \item[$k \Rightarrow k+1$: ] let us decompose the derivation
    sequence between $s$ and $t$:
    \begin{gather*}
      pu\ \esra{\rr}{}\ p'u'\ \esra{\rd}{}\ fp_1u_1 \ldots p_nu_n
      \\
      \esra{\rd \cup \rr}{*} \circ \esra{\ru \cup \rr}{*}\ fq_1v_1
      \ldots q_nv_n\ \esra{\ru}{}\ q'v'\ \esra{\rr}{} qv.
    \end{gather*}
    By induction hypothesis, we know that $fp_1u_1 \ldots p_nu_n
    \sra{\rr}{*} fq_1v_1 \ldots q_nv_n$, so by inference
    rule (5), $p'u'\,\rr\,q'v'$.  Finally, by inference
    rule (4), $pu\,\rr\,qv$.
  \end{list}
  It remains to prove that $\ru$, $\rr$ and $\rd$ verify the lemma. By
  (\ref{eq:trs_equiv2}), it suffices to show that $\sra{\ru \cup
    \rr}{*}\ \circ\ \sra{\rd \cup \rr}{*}\ =\ \sra{R'}{*}$. Proving
  the direct inclusion is equivalent to proving $\rr\ \subseteq\ 
  \sra{R'}{*}$, which can be seen easily by observing the inference
  rules given above. To prove the converse we will establish, by
  induction on $k$, the following property:
  \begin{gather*}
    \forall k,\ s\ \esra{R'}{k}\ t\ \Lra\ \exists\ c \in C_1(F),\ 
    (q_iu_i)_{i \in [n]} \in T(F,X),
    \\
    s\ \esra{\ru \cup \rr}{*}\ c[q_1u_1 \ldots q_nu_n]\ \esra{\rd \cup
      \rr}{*}\ t.
  \end{gather*}
  \begin{list}{}{}
  \item[$k = 0$: ] $s = t = c$, $n = 0$, so trivially $s\ 
    \sra{\ru \cup \rr}{*}\ c\ 
    \sra{\rd \cup \rr}{*}\ t$.
  \item[$k \Rightarrow k+1$: ] let $s\ \sra{R'}{k}\ t\ \sra{l,r}{} t'$
    with $l R' r$. By induction hypothesis,
    $$
    s\ \sra{\ru \cup \rr}{*}\ c[q_1u_1 \ldots
    q_nu_n]\ \sra{\rd \cup \rr}{*}\ t = c[t_1
    \ldots t_n].
    $$
    There are three possible cases:
    \begin{itemize}
    \item $\exists\,c',\ t = c'[t_1 \ldots t_n\,l]$: in this case $s$
      can be written $c'[s_1 \ldots s_n\,l]$, so the following
      derivation sequence is valid:
      \begin{align*}
        s\ \esra{\ru \cup \rr}{*} &\ c'[q_1u_1
        \ldots q_nu_n\,l]\,\ \esra{\ru}{}\,\ c'[q_1u_1
        \ldots q_nu_n\,q_lu]
        \\
        \esra{\rr}{}\,\ &\ c'[q_1u_1 \ldots q_nu_n\,q_rv]\ 
        \mbox{(because $l\,R'\,r$)}
        \\
        \esra{\rd}{}\,\ &\ c'[q_1u_1 \ldots q_nu_n\,r]\ 
        \esra{\rd \cup \rr}{*}\ c'[t_1 \ldots
        t_n\,r] = t'.
      \end{align*}
    \item $\exists\,i,\ t_i = \bar{t_i}[l]$: the only way to produce
      $t_i$ from $q_iu_i$ is along the steps
      $$
      q_iu_i \sra{\rd \cup \rr}{*}
      \bar{t_i}[q_lu] \sra{\rd}{}\ t_i.
      $$
      Thus the following derivation is valid:
      \begin{align*}
        s\ \esra{\ru \cup \rr}{*}\ c[q_1u_1
        \ldots q_nu_n]\ \esra{\rd \cup \rr}{*}
        &\ c[q_1u_1 \ldots \bar{t_i}[q_lu] \ldots q_nu_n]
        \\
        \esra{\rr}{}\,\ &\ c[q_1u_1 \ldots \bar{t_i}[q_rv]
        \ldots q_nu_n]\ \esra{\rd}{}\ t'.
      \end{align*}
    \item $\exists\,c',\,j > i \geq 1,\ t = c'[t_1 \ldots
      t_i\,l\,t_{j+1} \ldots t_n]$, with $l = \bar{l}[t_{i+1} \ldots
      t_j]$. The derivation sequence between $s$ and $t'$ can then be
      written
      \begin{align*}
        s\ \esra{\ru \cup \rr}{*} &\ c[q_1u_1
        \ldots q_nu_n]\ \esra{\rd \cup \rr}{*}\ 
        c[t_1 \ldots t_n]
        \\
        \esra{\ru}{*}\,\ &\ c[t_1 \ldots
        t_i\,q'_{i+i}u'_{i+1} \ldots q'_ju'_j\,t_{j+1} \ldots t_n]
        \\
        \esra{\ru}{}\,\ &\ c'[t_1 \ldots t_i\,q_lu\,t_{j+1}
        \ldots t_n]\,\ \ \esra{\rr}{}\,\ \ c'[t_1 \ldots
        t_i\,q_rv\,t_{j+1} \ldots t_n]
        \\
        \esra{\rd}{}\,\ &\ c'[t_1 \ldots t_i\,r\,t_{j+1}
        \ldots t_n] = t'
      \end{align*}
      Notice that for all $k \in [i+1,j]$, $q_ku_k \sra{\rd
        \cup \rr}{*} t_k \sra{\ru \cup
        \rr}{*} q'_ku'_k$.  Thus, by (\ref{eq:pont}),
      $q_ku_k\,\rr\,q'_ku'_k$, hence
      \begin{align*}
        s\ \esra{\ru \cup \rr}{*}\ c[q_1u_1
        \ldots q_nu_n]\ \esra{\ru \cup \rr}{*}
        &\ c'[q_1u_1 \ldots t_i\,q_lu\,q_{j+1}u_{j+1} \ldots q_nu_n]
        \\
        \esra{\rd \cup \rr}{*} &\ c'[t_1 \ldots
        t_i\,r\,t_{j+1} \ldots t_n] = t'.
      \end{align*}
    \end{itemize}
  \end{list}
  This concludes the proof of lemma \ref{lem:sat}. \qed
\end{proof}

\noindent
{\bf Proposition \ref{prop:suffrec}.} {\itshape The image and inverse
  image of any recognizable term language by the derivation of a
  recognizable linear suffix term rewriting system is recognizable.}

\begin{proof}
  Let $R_0$ be a recognizable linear suffix term rewriting system on
  $T(F)$, $R$ an equivalent normalized system on $T(F \cup F')$, with
  $F'$ a set of new function symbols. Let $G$ be the grammar
  recognizing its derivation, built as in the proof to Theorem
  \ref{th:suffix}, $N \cup \{I,I'\}$ its set of non-terminals. Let $A$
  be a finite non-deterministic top-down tree automaton accepting a
  recognizable language $L$.  Suppose $Q_A$ is the set of control
  states of $A$, disjoint from $F$ and $X$, $q_0 \in Q_A$ its unique
  initial state. Let $Q_A'$ be a disjoint copy of $Q_A$, we define the
  following grammar $G_A$ having non-terminals in $Q_A \cup Q_A' \cup
  (Q_A \times \mathit{Ran}(\ru \cup R'))^* | \mathit{Dom}(\rd \cup
  R')^*$ and the following set of production rules:
  \begin{itemize}
  \item For all rule $rf \ra{A}{} fr_1 \ldots r_n$:
    \begin{align}
      r\ \lra\ & f (r_1)_1 \ldots (r_n)_1\ \times\ f (r_1)_2 \ldots
      (r_n)_2
      \\
      r'\ \lra\ & f r'_1 \ldots r'_n
    \end{align}
  \item For all $r \in Q_A$, $px \in \mathit{Dom}(R) \cap
    \mathit{Ran}(R)$:
    \begin{equation}
      r\ \lra\ (r,px)|px,
    \end{equation}
  \item For all rule $p_1x_1 \ldots p_nx_n|v\ \lra\ p'_1x'_1 \ldots
    p'_nx'_n|v'$ of type (\ref{eq:eps}) in $G$ and state word $r_1
    \ldots r_n \in Q_A^*$:
    \begin{equation}
      (r_1,p_1x_1) \ldots (r_n,p_nx_n)|v\ \lra\ (r_1,p'_1x'_1) 
      \ldots (r_n,p'_nx'_n)|v'
    \end{equation}
  \item For all rule $u_1\,px\,u_2|v$ $\ra{}{}$ $s_1 (f\mu_{i+1}
    \ldots \mu_j) s_2 \times t$ of type (\ref{eq:prodg}) of $G$ and
    states word $r_1 \ldots r_i\,r\,r_{j+1} \ldots r_n \in Q_A^*$:
    \begin{equation}
      (r_1,p_1) \ldots (r_n,p_n)|v\ \era{}{}\ s' \times t'
    \end{equation}
    where $rf \ra{A}{} fr_{i+1} \ldots r_{j}$ and $s'$ and $t'$ are
    obtained from $s$ and $t$ by replacing each occurrence of $p_k$ in
    a non-terminal variable by $(r_k,p_k)$.
  \item For all rule $p_1x_1 \ldots p_nx_n|v \ra{}{} s \times t$ of
    type (\ref{eq:prodd}) of $G$ and word $r_1 \ldots r_n \in Q_A^n$:
    \begin{equation}
      (r_1,p_1) \ldots (r_n,p_n)|v\ \era{}{}\ s' \times t'
    \end{equation}
    where $s'$ and $t'$ are obtained from $s$ and $t$ by replacing
    each occurrence of $p_i$ in a non-terminal variable by
    $(r_i,p_i)$.
  \item Finally, for all $x R px$, $qx R x$, $x \in X$ and $r \in
    Q_A$:
    \begin{align}
      (r,px)|qx\ \lra\ & r\\
      (r,px)|\ \quad \lra\ & r'.
    \end{align}
  \end{itemize}
  One can see that, starting from non-terminal $q_0$, grammar $G_A$
  only accepts pairs $(s,t)$ such that $s \ra{R}{*} t$ and $s \in L$.
  Thus the set of all $t$ such that $(s,t)$ is generated by $G_A$ from
  $q_0$ is exactly the image of $L$ by the derivation of $R$:
  $$
  L(G_A,q_0)\ =\ \era{R}{*}(L).
  $$
  An automaton recognizing this set of terms can be built by taking
  the right projection of $G_A$, and by treating each non-terminal
  variable as a unary non-terminal. The rules of this automaton are
  given by the rules of $G_A$ broken into several rules over these
  non-terminals of length 1.
  
  Note that this proof is totally symmetrical, and that the
  synchronization of $G$ by a finite automaton $A$ could be done on
  the second projection instead of the first. Thus the converse
  result. \qed
\end{proof}


\begin{thebibliography}{10}

\bibitem{ajmo02}
P.~Abdulla, B.~Jonsson, P.~Mahata, and J.~D'Orso.
\newblock Regular tree model checking.
\newblock In {\em CAV 14th}, volume 2404 of {\em LNCS}, pages 555--568, 2002.

\bibitem{bjnt00}
A.~Bouajjani, B.~Jonsson, M.~Nilsson, and T.~Touili.
\newblock Regular model checking.
\newblock In {\em CAV 12th}, volume 1855 of {\em LNCS}, pages 403--418, 2000.

\bibitem{bt02}
A.~Bouajjani and T.~Touili.
\newblock Extrapolating tree transformations.
\newblock In {\em CAV 14th}, volume 2404 of {\em LNCS}, pages 539--554, 2002.

\bibitem{ck98}
H.~Calbrix and T.~Knapik.
\newblock A string-rewriting characterization of {Muller} and {Schupp}'s
  context-free graphs.
\newblock In {\em FSTTCS 18th}, volume 1530 of {\em LNCS}, pages 331--342,
  1998.

\bibitem{ca00}
D.~Caucal.
\newblock On word rewriting systems having a rational derivation.
\newblock In {\em FoSSaCS 3rd}, volume 1784 of {\em LNCS}, pages 48--62, 2000.

\bibitem{ck02}
D.~Caucal and T.~Knapik.
\newblock A {Chomsky}-like hierarchy of infinite graphs.
\newblock In {\em MFCS 27th}, volume 2420 of {\em LNCS}, pages 177--187, 2002.

\bibitem{co02}
T.~Colcombet.
\newblock On families of graphs having a decidable first order theory with
  reachability.
\newblock In {\em ICALP 29th}, volume 2380 of {\em LNCS}, pages 98--109, 2002.

\bibitem{dt85}
M.~Dauchet and S.~Tison.
\newblock Decidability of the confluence of finite ground term rewrite systems.
\newblock In {\em FCT 5th}, volume 199 of {\em LNCS}, pages 80--89, 1985.

\bibitem{dt90}
M.~Dauchet and S.~Tison.
\newblock The theory of ground rewrite systems is decidable.
\newblock In {\em LICS 90th}, pages 242--248. IEEE, 1990.

\bibitem{gv98}
P.~Gyenizse and S.~V{\'a}gv{\"o}lgyi.
\newblock Linear generalized semi-monadic rewrite systems effectively preserve
  recognizability.
\newblock {\em TCS}, 194(1--2):87--122, 1998.

\bibitem{lw01}
K.~Lodaya and P.~Weil.
\newblock Rationality in algebras with a series operation.
\newblock {\em Information and Computation}, 171(2):269--293, 2001.

\bibitem{lo02}
C.~L{\"o}ding.
\newblock Ground tree rewriting graphs of bounded tree width.
\newblock In {\em STACS 19th}, volume 2285 of {\em LNCS}, pages 559--570, 2002.

\bibitem{ra91}
J.-C. Raoult.
\newblock A survey of tree transductions.
\newblock Technical Report 1410, Inria, April 1991.

\bibitem{ra97}
J.-C. Raoult.
\newblock Rational tree relations.
\newblock {\em Bulletin of the Belgian Mathematics Society}, 4:149--176, 1997.

\bibitem{tks00}
T.~Takai, Y.~Kaji, and H.~Seki.
\newblock Right-linear finite path overlapping term rewriting systems
  effectively preserve recognizability.
\newblock In {\em RTA 11th}, volume 1833 of {\em LNCS}, pages 246--260, 2000.

\end{thebibliography}
\end{document}